\def\doi{9(1:05)2013}
  \newcommand{\setN}{\mathbb{N}}
  \newcommand{\setZ}{\mathbb{Z}}
  \newcommand{\setQ}{\mathbb{Q}}
  \newcommand{\norm}[2]{||#2||_{#1}}
  \newcommand{\moins}{\backslash}
\theoremstyle{plain}
\theoremstyle{definition}
\theoremstyle{remark}
\renewcommand{\vec}[1]{{\mathbf #1}}
\begin{document}

  \title[Vector Addition System Reversible Reachability Problem]{Vector Addition System Reversible Reachability Problem\rsuper*} 

\author[J.~Leroux]{Jérôme Leroux}	
\thanks{Work funded by ANR grant REACHARD-ANR-11-BS02-001.}
\address{LaBRI, Univ. Bordeaux, CNRS}	
\email{leroux@labri.fr}

\keywords{Vector addition system, reachability, boundedness, cover.}
\ACMCCS{[{\bf Theory of computation}]:  Logic---Logic and verification}
\subjclass{F.3.1}
\amsclass{68R99, 68Q05, 03D99.}
\titlecomment{{\lsuper*}Work based on the earlier extended abstracts \cite{DBLP:conf/concur/Leroux11}.}

 \maketitle
  
 \begin{abstract}
   The reachability problem for vector addition systems is a central problem of net theory. This problem is known to be decidable but the complexity is still unknown. Whereas the problem is EXPSPACE-hard, no elementary upper bounds complexity are known. In this paper we consider the reversible reachability problem. This problem consists to decide if two configurations are reachable one from each other, or equivalently if they are in the same strongly connected component of the reachability graph. We show that this problem is EXPSPACE-complete. As an application of the introduced materials we characterize the reversibility domains of a vector addition system. 
 \end{abstract}

\section{Introduction}
Vector addition systems (VASs) or equivalently Petri nets are one of the most popular formal methods \cite{survey-esparza} for the representation and the analysis of parallel processes. Their reachability problem is central since many computational problems (even outside the realm of parallel processes) reduce to the reachability problem. Sacerdote and Tenney provided in \cite{sacerdote77} a partial proof of decidability of this problem. The proof was completed in 1981 by Mayr \cite{Mayr81} and simplified by Kosaraju \cite{Kosaraju82} from \cite{sacerdote77,Mayr81}. Ten years later \cite{lambert-structure}, Lambert provided a further simplified version based on \cite{Kosaraju82}. This last proof still remains difficult and the upper-bound complexity of the corresponding algorithm is just known to be non-primitive recursive. Nowadays, the exact complexity of the reachability problem for VASs is still an open-problem. The problem is known to be EXPSPACE-hard \cite{Lipton76}. Note that the existence of a primitive recursive upper bound of complexity for the reachability problem is still open since the Zakaria Bouziane's paper~\cite{743436} introducing such a bound was proved to be incorrect by Petr Jan\v{c}ar~\cite{Jancar:2008:BTP:1453257.1453392}.

\smallskip

Recently, in \cite{EasyChair:31}, a new proof of the reachability problem based on the notion of \emph{transformer relations} inspired by Hauschildt~\cite{Hauschildt:90} was published. That proof shows that reachability sets are \emph{almost semilinear}, a class of sets introduced in that paper that extends the class of Presburger sets. An application of that result was provided; a final configuration is proved to be not reachable from an initial one if and only if there exists a forward inductive invariant definable in the Presburger arithmetic that contains the initial configuration but not the final one. Since we can decide if a Presburger formula denotes a forward inductive invariant, we deduce that there exist checkable certificates of non-reachability in the Presburger arithmetic. In particular, there exists a simple algorithm for deciding the general VAS reachability problem based on two semi-algorithms. A first one that tries to prove the reachability by enumerating finite sequences of actions and a second one that tries to prove the non-reachability by enumerating Presburger formulas. The Presburger inductive invariants presented in that paper is obtained thanks to strongly connected subreachability graphs (called \emph{witness graph} and recalled in Section~\ref{sec:rev}). As a direct consequence, configurations in these graphs are reachable one from each other.

\smallskip

In this paper we consider the reversible reachability problem that consists to decide if two configurations are reachable one from each other. We prove that this problem is EXPSPACE-complete. This result extends known result for the subclasses of reversible and cyclic vector addition systems~\cite{Bouziane97,Lipton76}. We also prove that the general \emph{coverability problem} reduces to the reversible reachability problem (see Section~\ref{sec:vas}). As an application of the introduced materials we characterize the reversibility domains of a vector addition system in the last Section~\ref{sec:application}.

\section{Projected Vectors}
We denote by $\setZ$ and $\setN$ the set of \emph{integers} and \emph{natural numbers}. In this paper, some components of vectors in $\setZ^d$ are projected away. In order to avoid multiple dimensions, we introduce an additional element $\star\not\in\setZ$, the set $\setZ_\star=\setZ\cup\{\star\}$, and the set $\setZ_I^d$ of vectors $\vec{z}\in\setZ_\star^d$ such that $I=\{i \mid \vec{z}(i)=\star\}$. Operations on $\setZ$ are extended component-wise into operations on $\setZ_I^d$ by interpreting $\star$ as a projected component. More formally we denote by $\vec{z}_1+\vec{z}_2$ where $\vec{z}_1,\vec{z}_2\in\setZ_I^d$ the vector $\vec{z}\in\setZ_I^d$ defined by $\vec{z}(i)=\vec{z}_1(i)+\vec{z}_2(i)$ for every $i\not\in I$. Symmetrically given $\vec{z}\in\setZ_I^d$ and an integer $k\in\setZ$, we denote by $k\vec{z}$ the vector in $\setZ_I^d$ defined by $(k\vec{z})(i)=k(\vec{z}(i))$ for every $i\not\in I$. The usual order $\leq$ over $\setZ$ is extended over $\setZ_\star$ into the unique total order $\leq$ satisfying $z\leq \star$ for every $z\in\setZ_\star$. The relation $\leq$ is extended component-wise over $\setZ_*^d$.

\begin{exa}
  We have $k(\star,1)=(\star,k)$ even if $k=0$. We also have $(\star,5)-(\star,2)=(\star,3)$ and $(\star,1)+(\star,2)=(\star,3)$. We have $\cdots\leq -1\leq 0\leq 1\leq \cdots \leq \star$.
\end{exa}

The \emph{projection} of a vector $\vec{z}\in\setZ_I^d$ by eliminating components indexed by $L\subseteq\{1,\ldots,d\}$ is the vector in $\setZ_{I\cup L}^d$ defined by $\pi_L(\vec{z})(i)=\vec{z}(i)$ for every $i\not\in L$. The projection of a set $\vec{Z}\subseteq \setZ_I^d$  by eliminating components indexed by $L$ is defined as expected by $\pi_L(\vec{Z})=\{\pi_L(\vec{z})\mid \vec{z}\in\vec{Z}\}$.

\begin{exa}
  Let $L=\{1\}$. We have $\pi_L(1000,1)=(\star,1)$ and $\pi_L(4,\star)=(\star,\star)$. We also have $\pi_L(\{(2,0),(1,1),(2,0)\})=\{(\star,0),(\star,1),(\star,2)\}$.
\end{exa}

Let $\vec{z}\in\setZ_I^d$. We denote by $\norm{\infty}{\vec{z}}$ the natural number equals to $0$ if $I=\{1,\ldots,d\}$ and equals to $\max_{i \not\in  I}|\vec{z}(i)|$ otherwise. Given a finite set $\vec{Z}\subseteq \setZ_I^d$ we denote by $\norm{\infty}{\vec{Z}}$ the natural number $\max_{\vec{z}\in\vec{Z}}\norm{\infty}{\vec{z}}$ if $\vec{Z}$ is non empty and $0$ if $\vec{Z}$ is empty.

\section{Vector Addition Systems}\label{sec:vas}
A \emph{Vector Addition System} (\emph{VAS}) is a finite set $\vec{A}\subseteq \setZ^d$. Vectors $\vec{a}\in \vec{A}$ are called \emph{actions} and vectors $\vec{c}\in\setN_\star^d$ with $\setN_\star=\setN\cup\{\star\}$ are called \emph{configurations}. A configuration in $\setN^d$ is said to be \emph{standard} and we denote by $\setN_I^d$ the set of configurations $\vec{c}\in\setN_\star^d$ such that $I=\{i\mid \vec{c}(i)=\star\}$. Given a word $\sigma=\vec{a}_1\ldots\vec{a}_k$ of actions $\vec{a}_j\in \vec{A}$ we denote by $\Delta(\sigma)$ the vector in $\setZ^d$ defined by $\Delta(\sigma)=\sum_{j=1}^k\vec{a}_j$. This vector is called the \emph{displacement} of $\sigma$. We also introduce the vector $\Delta_I(\sigma)=\pi_I(\Delta(\sigma))$. A \emph{run} $\rho$ from a configuration $\vec{x}\in\setN_I^d$ to a configuration $\vec{y}\in\setN_I^d$ \emph{labelled} by a word $\sigma=\vec{a}_1\ldots\vec{a}_k$ of actions $\vec{a}_j\in\vec{A}$ is a non-empty word $\rho=\vec{c}_0\ldots\vec{c}_k$ of configurations $\vec{c}_j\in\setN_I^d$ such that $\vec{c}_0=\vec{x}$, $\vec{c}_k=\vec{y}$ and such that $\vec{c}_j=\vec{c}_{j-1}+\pi_I(\vec{a}_j)$ for every $j\in\{1,\ldots,k\}$. Note that in this case $\rho$ is unique and $\vec{y}-\vec{x}=\Delta_I(\sigma)$. This run is denoted by $\vec{x}\xrightarrow{\sigma}\vec{y}$. The set $I$ is called the set of \emph{projected components} of $\rho$. The \emph{projection} $\pi_L(\rho)$ of a run $\rho=\vec{c}_0\ldots\vec{c}_k$  by eliminating components indexed by $L\subseteq\{1,\ldots,d\}$ is defined as expected as the run $\pi_L(\rho)=\pi_L(\vec{c}_0)\ldots\pi_L(\vec{c}_k)$. Observe that if $\rho$ is the run $\vec{x}\xrightarrow{\sigma}\vec{y}$ then $\pi_L(\rho)$ is the run $\pi_L(\vec{x})\xrightarrow{\sigma}\pi_L(\vec{y})$. The following lemma provides a simple way to deduce a converse result. 
\begin{lem}\label{lem:holding}
  Let $L$ be a set of indexes and $\vec{c}$ be a configuration such that there exists a run from $\pi_L(\vec{c})$ labelled by a word $\sigma$. If $\vec{c}(i)\geq |\sigma|~\norm{\infty}{\vec{A}}$ for every $i\in L$ then there exists a run from $\vec{c}$ labelled by $\sigma$.
\end{lem}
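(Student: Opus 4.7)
The plan is to lift the projected run back to a run starting from $\vec{c}$ by explicitly reconstructing the values of the erased components along the way, and to use the hypothesis to certify that those reconstructed values stay non-negative.

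First I would fix notation. Let $I$ denote the set of projected components of $\vec{c}$, so $\vec{c}\in\setN_I^d$ and consequently $\pi_L(\vec{c})\in\setN_{I\cup L}^d$. Write the given run from $\pi_L(\vec{c})$ labelled by $\sigma=\vec{a}_1\cdots\vec{a}_k$ as $\vec{d}_0\cdots\vec{d}_k$ with $\vec{d}_0=\pi_L(\vec{c})$ and $\vec{d}_j=\vec{d}_{j-1}+\pi_{I\cup L}(\vec{a}_j)$. Note that the components we need to recover are those in $L\setminus I$, since components in $I$ remain $\star$ in $\vec{c}$ and components outside $L$ are unchanged by $\pi_L$.

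Next I would define candidate configurations $\vec{c}_0,\ldots,\vec{c}_k\in\setN_\star^d$ componentwise by
\[
\vec{c}_j(i)=\begin{cases}\star & \text{if } i\in I,\\ \vec{d}_j(i) & \text{if } i\notin I\cup L,\\ \vec{c}(i)+\sum_{m=1}^{j}\vec{a}_m(i) & \text{if } i\in L\setminus I.\end{cases}
\]
By construction $\vec{c}_0=\vec{c}$, and for each $j$ and each $i\notin I$ one checks directly that $\vec{c}_j(i)=\vec{c}_{j-1}(i)+\vec{a}_j(i)$, which is exactly the transition condition $\vec{c}_j=\vec{c}_{j-1}+\pi_I(\vec{a}_j)$.

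The only thing left to verify is that every $\vec{c}_j$ really lies in $\setN_I^d$, i.e. that the components in $L\setminus I$ stay non-negative. For such an index $i$, $\vec{c}(i)$ is an integer, the hypothesis yields $\vec{c}(i)\ge|\sigma|\,\norm{\infty}{\vec{A}}$, and the partial displacement satisfies $\bigl|\sum_{m=1}^{j}\vec{a}_m(i)\bigr|\le j\,\norm{\infty}{\vec{A}}\le|\sigma|\,\norm{\infty}{\vec{A}}$, so $\vec{c}_j(i)\ge 0$. Setting $\vec{y}=\vec{c}_k$ then gives the desired run $\vec{c}\xrightarrow{\sigma}\vec{y}$.

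There is no real obstacle here; the proof is a direct unpacking of the definitions of projection, run and displacement. The only point that demands a little care is bookkeeping of the index sets, in particular not forgetting that $L$ may intersect the already-projected set $I$, in which case the hypothesis $\vec{c}(i)\ge|\sigma|\,\norm{\infty}{\vec{A}}$ holds vacuously because $\vec{c}(i)=\star$, and those components simply stay $\star$ throughout the lifted run.
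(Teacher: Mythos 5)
Your proof is correct and follows essentially the same route as the paper's: both lift the run by setting $\vec{c}_j=\vec{c}+\pi_I(\vec{a}_1+\cdots+\vec{a}_j)$ (your case split collapses to this single formula, since for $i\notin I\cup L$ the value $\vec{d}_j(i)$ equals $\vec{c}(i)+\sum_{m\le j}\vec{a}_m(i)$ anyway), and both certify non-negativity of the components in $L\setminus I$ via the bound $\bigl|\sum_{m\le j}\vec{a}_m(i)\bigr|\le|\sigma|\,\norm{\infty}{\vec{A}}$. Your closing remark about indices in $L\cap I$ satisfying the hypothesis vacuously is a correct reading of the extended order on $\setZ_\star$.
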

\begin{proof}
  Let $\vec{c}\in\setN_I^d$ be a configuration such that there exists a path from $\pi_L(\vec{c})$ labelled by a word $\sigma=\vec{a}_1\ldots\vec{a}_k$ where $\vec{a}_j\in\vec{A}$. Let us introduce the vector $\vec{c}_j=\vec{c}+\pi_I(\vec{a}_1+\ldots+\vec{a}_j)$. Since there exists a run from $\pi_L(\vec{c})$ labelled by $\sigma$ we deduce that $\pi_L(\vec{c}_j)\in \setN_{I\cup L}^d$. Observe that for every $j\in\{0,\ldots,k\}$ and for every $i\not\in I$ we have $\vec{c}_j(i)\geq \vec{c}(i)-|\sigma|~\norm{\infty}{\vec{A}}$. In particular if $\vec{c}(i)\geq |\sigma|~\norm{\infty}{\vec{A}}$ for every $i\in L\moins I$ we deduce that $\vec{c}_j\in\setN_I^d$. Therefore $\rho=\vec{c}_0\ldots\vec{c}_k$ is the run from $\vec{c}$ labelled by $\sigma$.
\end{proof}

\begin{exa}
  $\rho=(2,0)(1,1)(0,2)$ is the run $(2,0)\xrightarrow{(-1,1)(-1,1)}(0,2)$. Let $L=\{1\}$ and observe that $\pi_L(\rho)=(\star,0)(\star,1)(\star,2)$ is the run $(\star,0)\xrightarrow{(-1,1)(-1,1)}(\star,2)$.
\end{exa}

Let $\vec{x}$ and $\vec{y}$ be two standard configurations. When there exists a run from $\vec{x}$ to $\vec{y}$ we say that $\vec{y}$ is \emph{reachable} from $\vec{x}$ and if there also exists a run from $\vec{y}$ to $\vec{x}$ we say that $(\vec{x},\vec{y})$ is in the \emph{reversible reachability relation}. The problem of deciding this last property is called the \emph{reversible reachability problem}. This problem is shown to be EXPSPACE-hard by introducing the \emph{coverability problem}. Given two standard configurations $\vec{x}$ and $\vec{y}$ we say that $\vec{y}$ is \emph{coverable} by $\vec{x}$ if there exists a standard configuration in  $\vec{y}+\setN^d$ reachable from $\vec{x}$. The coverability problem is known to be EXPSPACE-complete \cite{Lipton76,rackoff78}. By reducing the coverability problem to the reversible reachability problem we get the following lemma.
\begin{lem}
  The reversible reachability problem is EXPSPACE-hard.
\end{lem}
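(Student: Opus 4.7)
The plan is to exhibit a polynomial-time reduction from the coverability problem, which is known to be EXPSPACE-hard by \cite{Lipton76,rackoff78}, to the reversible reachability problem. Given a coverability instance $(\vec{A},\vec{x},\vec{y})$ with $\vec{A}\subseteq\setZ^d$ and $\vec{x},\vec{y}\in\setN^d$, I will build a VAS $\vec{A}'\subseteq\setZ^{d+1}$ and standard configurations $\vec{x}',\vec{y}'\in\setN^{d+1}$ so that $(\vec{x}',\vec{y}')$ lies in the reversible reachability relation of $\vec{A}'$ if and only if $\vec{y}$ is coverable from $\vec{x}$ by $\vec{A}$. The extra coordinate will act as a one-bit ``phase'' token distinguishing a simulation phase from a reset phase.

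Concretely, I take $\vec{x}'=(\vec{x},0)$ and $\vec{y}'=(\vec{0},1)$, and let $\vec{A}'$ consist of (i) the lifted original actions $(\vec{a},0)$ for $\vec{a}\in\vec{A}$, (ii) a commit action $(-\vec{y},+1)$, (iii) decrements $(-\vec{e}_i,0)$ for $i\in\{1,\ldots,d\}$, and (iv) a reset action $(\vec{x},-1)$. The size of $\vec{A}'$ is polynomial in the size of the input.

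For the easy direction, assume $\vec{y}$ is coverable, so some $\vec{z}\geq\vec{y}$ is reachable from $\vec{x}$ in $\vec{A}$. I lift that run to push $\vec{x}'$ to $(\vec{z},0)$, fire the commit action to reach $(\vec{z}-\vec{y},1)$, and then apply the decrements in (iii) to reach $(\vec{0},1)=\vec{y}'$; the reset action in (iv) then takes $\vec{y}'$ back to $\vec{x}'$ in a single step, so $(\vec{x}',\vec{y}')$ is reversibly reachable. For the converse I take any run $\vec{x}'\xrightarrow{*}\vec{y}'$ in $\vec{A}'$ and focus on the first firing of the commit action (ii); it must occur because (ii) is the only action that raises the last coordinate. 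Before that firing the last coordinate is still $0$, so the reset action (iv) has not fired, and the configuration just before the commit is some $(\vec{z},0)$ with $\vec{z}\geq\vec{y}$ that is reachable from $\vec{x}$ using only the lifted actions (i) and the decrements (iii).

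The main obstacle is then the following monotonicity property, which I would prove by induction on the number of decrement steps: any configuration reachable from $\vec{x}$ using actions of type (i) and (iii) is dominated (componentwise) by some configuration reachable from $\vec{x}$ using actions of type (i) alone. The inductive step pushes the last decrement step $-\vec{e}_i$ past the subsequent (i)-actions using the fact that firing a sequence of (i)-actions from a larger starting configuration remains enabled and preserves the difference, in the spirit of Lemma~\ref{lem:holding}. Applied to $\vec{z}\geq\vec{y}$ this yields some $\vec{z}'\geq\vec{z}\geq\vec{y}$ reachable from $\vec{x}$ in $\vec{A}$, witnessing coverability and completing the reduction.
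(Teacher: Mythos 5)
Your reduction is correct and is essentially the paper's own: the paper likewise reduces from the EXPSPACE-hard coverability problem via a commit action subtracting $\vec{y}$ and a reset action adding $\vec{x}$, guarded by a phase token (it uses two control coordinates $(1,0)/(0,1)$ rather than your single counter), and its converse direction is the same analysis of the prefix up to the first commit. The only cosmetic difference is that the paper adds the decrement actions to $\vec{A}$ up front, asserting that this leaves coverability unchanged so that coverability becomes reachability, whereas you keep the decrements as separate phase-neutral actions and prove the corresponding monotonicity fact explicitly.
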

\begin{proof}
  We consider a vector addition system $\vec{A}$. We first observe that we can add to the vector addition system $\vec{A}$ additional actions of the form $(0,\ldots,0,-1,0,\ldots,0)$ without modifying the coverability problem. Thanks to this transformation a standard configuration $\vec{y}$ is coverable from a standard configuration $\vec{x}$ if and only if $\vec{y}$ is reachable from $\vec{x}$. We introduce the VAS $\vec{V}$ in dimension $d+2$ defined by $\vec{V}=((0,0)\times \vec{A})\cup \{(-1,1,-\vec{y}),(1,-1,\vec{x})\}$.

  Let us prove that $(1,0,\vec{x})$ and $(0,1,\vec{0})$ are in the reversible reachability relation of $\vec{V}$ if and only if $\vec{y}$ is coverable from $\vec{x}$ in $\vec{A}$. In fact if $\vec{y}$ is coverable from $\vec{x}$ in $\vec{A}$, then $\vec{y}$ is reachable from $\vec{x}$ thanks to the additional actions $(0,\ldots,0,-1,0,\ldots,0)$. Hence there exists a run from $\vec{x}$ to $\vec{y}$ labelled by a word $\vec{a}_1\ldots\vec{a}_k$ of actions $\vec{a}_j\in\vec{A}$. The following runs shows that $(1,0,\vec{x})$ and $(0,1,\vec{0})$ are in the reversible reachability relation of $\vec{V}$:
  $$(1,0,\vec{x})\xrightarrow{(0,0,\vec{a}_1)\ldots(0,0,\vec{a}_k)}(1,0,\vec{y})\xrightarrow{(-1,1,-\vec{y})}(0,1,\vec{0})\xrightarrow{(1,-1,\vec{x})}(1,0,\vec{x})$$
  Converselly, let us assume that $(1,0,\vec{x})$ and $(0,1,\vec{0})$ are in the reversible reachability relation of $\vec{V}$. Hence there exists a run from $(1,0,\vec{x})$ to $(0,1,\vec{0})$ labelled by a word $\sigma$. We consider the maximal prefix $w$ of $\sigma$ in $((0,0)\times\vec{A})^*$. This word has the special form $w=(0,0,\vec{a}_1)\ldots(0,0,\vec{a}_k)$. Observe that $w$ is the label of a run from $(1,0,\vec{x})$ to a standard configuration of the form $(1,0,\vec{z})$. We deduce that $\vec{a}_1\ldots\vec{a}_k$ is a the label of run in $\vec{A}$ from $\vec{x}$ to $\vec{z}$. Moreover, since $(1,0,\vec{z})\not=(0,1,\vec{0})$ we deduce that $w$ is a strict prefix of $\sigma$. Let $\vec{v}\in\vec{V}$ such that $w\vec{v}$ is a prefix of $\sigma$. By maximality of $\sigma$ we deduce that $\vec{v}\in  \{(-1,1,-\vec{y}),(1,-1,\vec{x})\}$. Since $(1,0,\vec{z})+\vec{v}\geq \vec{0}$, we get $\vec{v}=(-1,1,-\vec{y})$. Thus $\vec{z}\geq \vec{y}$ and we have proved that $\vec{y}$ is coverable from $\vec{x}$ in $\vec{A}$.
  
  As a direct consequence, the reversible reachability problem is EXPSPACE-hard.
\end{proof}

\section{Subreachability Graphs}
A \emph{subreachability graph} is a graph $G=(\vec{Q},T)$ where $\vec{Q}\subseteq \setN_I^d$ is a non empty finite set of configurations called \emph{states} and $T\subseteq \vec{Q}\times\vec{A}\times\vec{Q}$ is a finite set of triples $(\vec{x},\vec{a},\vec{y})\in\vec{Q}\times \vec{A}\times \vec{Q}$ satisfying  $\vec{x}\xrightarrow{\vec{a}}\vec{y}$ called \emph{transitions}. The set $I$ is called the set of \emph{projected components} of $G$ and the subreachability graph is said to be \emph{standard} if $I$ is empty. A \emph{witness graph} is a strongly connected subreachability graph (see Fig.~\ref{fig:subgraph} for examples). The \emph{projection} $\pi_L(t)$ of a transition $t=(\vec{x},\vec{a},\vec{y})$  by eliminating components indexed by $L\subseteq \{1,\ldots,d\}$ is defined by $\pi_L(t)=(\pi_L(\vec{x}),\vec{a},\pi_L(\vec{y}))$ and the projection of the set of transitions $T$ is defined by $\pi_L(T)=\{\pi_L(t)\mid t\in T\}$. The projection $\pi_L(G)$ of a subreachability graph $G=(\vec{Q},T)$ is the subreachability graph $\pi_L(G)=(\pi_L(\vec{Q}),\pi_L(T))$.

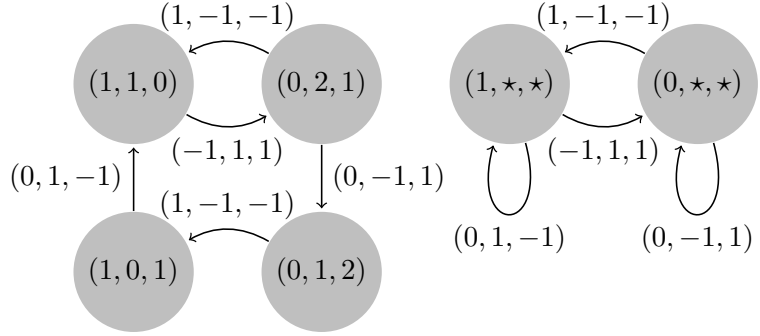
\begin{figure}
  \begin{center}
\begin{tikzpicture}[->,shorten >=1pt,auto,node distance=2.5cm,on grid,semithick, state/.style={circle,fill=black!25}]
    
    \node[state] (Aup) {$(1,1,0)$};
    \node[state] (Bup) [right=of Aup] {$(0,2,1)$};
    \node[state] (Adown) [below=of Aup] {$(1,0,1)$};
    \node[state] (Bdown) [right=of Adown] {$(0,1,2)$};

    \path (Aup) edge [bend right] node[anchor=north] {$(-1,1,1)$} (Bup);
    \path (Bup) edge [bend right] node[anchor=south] {$(1,-1,-1)$} (Aup);
    \path (Bdown) edge [bend right]  node[anchor=south] {$(1,-1,-1)$} (Adown);
    
    \path (Bup) edge  node[right] {$(0,-1,1)$} (Bdown);
    \path (Adown) edge node[left] {$(0,1,-1)$} (Aup);

    \node[state] (A) [right=of Bup]{$(1,\star,\star)$};
    \node[state] (B) [right=of A] {$(0,\star,\star)$};   
    \path (A) edge [bend right] node[anchor=north] {$(-1,1,1)$} (B);
    \path (B) edge [bend right] node[anchor=south] {$(1,-1,-1)$} (A);
    \path (B) edge [loop below] node{$(0,-1,1)$} (B);
    \path (A) edge [loop below] node{$(0,1,-1)$} (A);
  \end{tikzpicture}

  \end{center}
  \caption{A subreachability graph $G$ and the subreachability graph $\pi_L(G)$ with $L=\{2,3\}$.\label{fig:subgraph}}
\end{figure}

\begin{exa}\label{ex:subreachability}
  A standard subreachability graph $G=(\vec{Q},T)$ and the subreachability graph $\pi_L(G)$ obtained from $G$ by eliminating components indexed by $L=\{2,3\}$ are depicted in Fig.~\ref{fig:subgraph}.
\end{exa}

A \emph{path} in a subreachability graph $G$ from a configuration $\vec{x}\in\vec{Q}$ to a configuration $\vec{y}\in\vec{Q}$ labelled by a word $\sigma=\vec{a}_1\ldots\vec{a}_k$ of actions $\vec{a}_j\in\vec{A}$ is a word $p=t_1\ldots t_k$ of transitions $t_j\in T$ of the form $t_j=(\vec{c}_{j-1},\vec{a}_j,\vec{c}_j)$ with $\vec{c}_0=\vec{x}$ and $\vec{c}_k=\vec{y}$. We observe that the word $p$ is unique. This path is denoted by $\vec{x}\xrightarrow{\sigma}_G\vec{y}$. Let us observe that in this case $\rho=\vec{c}_0\ldots\vec{c}_k$ is the unique run $\vec{x}\xrightarrow{\sigma}\vec{y}$. In particular if a path $\vec{x}\xrightarrow{\sigma}_G\vec{y}$ exists then the run $\vec{x}\xrightarrow{\sigma}\vec{y}$ also exists. Note that conversely if there exists a run $\vec{x}\xrightarrow{\sigma}\vec{y}$ then there exists a subreachability $G$ such that $\vec{x}\xrightarrow{\sigma}_G\vec{y}$. Such a $G$ is obtained by introducing the set of states $\vec{Q}=\{\vec{c}_0,\ldots,\vec{c}_k\}$ and the set of transitions $T=\{t_1,\ldots,t_k\}$ where $t_j=(\vec{c}_{j-1},\vec{a}_j,\vec{c}_j)$. A path $\vec{x}\xrightarrow{\sigma}_G\vec{y}$ is called a \emph{cycle} if $\vec{x}=\vec{y}$. The cycle is said to be \emph{simple} if $\vec{c}_{j_1}=\vec{c}_{j_2}$ with $j_1<j_2$ implies $j_1=0$ and $j_2=k$. The projection $\pi_L(p)$ of a path $p=t_1\ldots t_k$ in $G$  by eliminating components indexed by $L\subseteq \{1,\ldots,d\}$ is the path $\pi_L(p)=\pi_L(t_1)\ldots\pi_L(t_k)$ in $\pi_L(G)$. Observe that the projection of a path $\vec{x}\xrightarrow{\sigma}_G\vec{y}$  by eliminating components indexed by $L$ is the path $\pi_L(\vec{x})\xrightarrow{\sigma}_{\pi_L(G)}\pi_L(\vec{y})$. The \emph{Parikh image} of a path is the function $\mu:T\rightarrow\setN$ defined by $\mu(t)$ is the number of occurrences of $t$ in this path. A cycle is said to be \emph{total} if its Parikh image $\mu$ satisfies $\mu(t)\geq 1$ for every $t\in T$.

\begin{exa}
  Let us come back to the standard witness graph $G$ depicted in Fig.~\ref{fig:subgraph}. Let us consider the cycle $(1,1,0)\xrightarrow{(-1,1,1)(1,-1,-1)}_G(1,1,0)$ in $G$. Its projection  by eliminating components indexed by $L=\{2,3\}$ is the cycle $(1,\star,\star)\xrightarrow{(-1,1,1)(1,-1,-1)}_{\pi_L(G)}(1,\star,\star)$ in the witness graph $\pi_L(G)$ also depicted in Fig.~\ref{fig:subgraph}.
\end{exa}

A word $\sigma\in\vec{A}^*$ is said to be \emph{forward iterable} from a configuration $\vec{c}$ if there exists a run $\vec{c}\xrightarrow{\sigma}\vec{y}$ such that $\vec{c}\leq \vec{y}$. In this case the configuration $\vec{c}_\star=\pi_L(\vec{c})$ where $L=\{i \mid \vec{c}(i)\not=\vec{y}(i)\}$ is called the \emph{forward limit} of $\sigma$ from $\vec{c}$. We observe that $\sigma$ is forward iterable from $\vec{c}$ if and only if for every $n\in\setN$ there exists a run $\vec{c}\xrightarrow{\sigma^n}\vec{y}_n$. In that case $L$ is the minimal set of indexes such that $\pi_L(\vec{y}_n)$ does not depend on $n$. Symmetrically $\sigma$ is said to be \emph{backward iterable} from a configuration $\vec{c}$ if there exists a run $\vec{x}\xrightarrow{\sigma}\vec{c}$ such that $\vec{c}\leq \vec{x}$. In this case the configuration $\vec{c}_\star=\pi_L(\vec{c})$  where $L=\{i \mid \vec{c}(i)\not=\vec{x}(i)\}$ is called the \emph{backward limit} of $\sigma$ from $\vec{c}$.

\begin{exa}
  The action $\vec{a}=(0,-1,1)$ is forward iterable from $\vec{x}=(0,\star,0)$ since $(0,\star,0)\xrightarrow{\vec{a}}(0,\star,1)$. Observe that in this case $(0,\star,0)\xrightarrow{\vec{a}^n}(0,\star,n)$ for every $n\in\setN$. The forward limit of $\vec{a}$ from $(0,\star,0)$ is $(0,\star,\star)$.
\end{exa}

A configuration $\vec{c}$ is said to be \emph{forward pumpable} by a cycle $\vec{q}\xrightarrow{\sigma}_G\vec{q}$ if $\sigma$ is forward iterable from $\vec{c}$ with a forward limit equals to $\vec{q}$. Note that in this case $\vec{q}$ is unique since it satisfies $\vec{q}=\pi_I(\vec{c})$ where $I$ is the set of projected components of $G$. Symmetrically a configuration $\vec{c}$ is said to be \emph{backward pumpable} by a cycle $\vec{q}\xrightarrow{\sigma}_G\vec{q}$ if $\sigma$ is backward iterable from $\vec{c}$ with a backward limit equals to $\vec{q}$.

\begin{exa}
  Let us come back to the witness graph $\pi_L(G)$ depicted in Fig.~\ref{fig:subgraph}. Observe that $(0,\star,0)$ is forward pumpable by  $(0,\star,\star)\xrightarrow{(0,-1,1)}_{\pi_L(G)}(0,\star,\star)$.
\end{exa}

\section{Outline}
The remainder of this paper is a proof that the reversible reachability problem is in EXPSPACE. We prove that if a pair $(\vec{x},\vec{y})$ of standard configurations are in the reversible reachability relation then there exist runs from $\vec{x}$ to $\vec{y}$ and from $\vec{y}$ to $\vec{x}$ with lengths bounded by a number double exponential in the size of $(\vec{x},\vec{A},\vec{y})$ with the binary encoding for numbers. Using the fact that NEXPSPACE=EXPSPACE, and that double exponential numbers can be stored
in exponential space, one obtain the EXPSPACE upper bound. These ``short'' runs are obtained as follows.

Theorem~\ref{thm:bound} gives a bound on the size of the Parikh image of a cycle in a witness graph to achieve a particular displacement vector,
using a result of Pottier~\cite{P-RTA91}. This result is used in Section~\ref{sec:rev}, which considers the special case of reversible witness graphs in which each path can be followed by another path such that the total displacement is zero.  In Theorem~\ref{thm:reversible} it is shown that a reversible witness graph possesses a ``short'' total cycle that has a zero displacement.

Section~\ref{sec:pump} takes an arbitrary witness graph $G$ and asserts the existence of a set of indexes $J$ such that the witness graph $\pi_J(G)$ has a ``small'' number of states and such that states $\vec{q}$ of $G$ that are not ``too'' large are forward and backward pumpable by ``short'' cycles in $\pi_J(G)$.

The development culminates with the main result in Section~\ref{sec:deciding}.  There, we consider a reversible witness graph where $\vec{x}$ and $\vec{y}$ are two states. This graph is finite but potentially very large. One then uses the result
from Section~\ref{sec:pump} to generate a reversible witness graph $\pi_J(G)$ satisfying the previous conditions in such a way that $\vec{x}$ and $\vec{y}$ can be considered as not ``too'' large configurations. Most of the work involves showing how to replace arbitrary path between $\vec{x}$ and $\vec{y}$ by ``short'' paths by exploiting the fact that $\vec{x}$ and $\vec{y}$ are pumpable to move from $\pi_J(G)$ back to $G$.

\section{Displacement Vectors}\label{sec:euler}
A \emph{displacement vector}  of a witness graph $G$ is a finite sum of vectors of the form $\Delta(\sigma)=\sum_{j=1}^k\vec{a}_j$ where $\sigma=\vec{a}_1\ldots\vec{a}_k$ is a word labelling a cycle in $G$. We denote by $\vec{Z}_G$ the set of \emph{displacement vectors}. Observe that $\vec{Z}_G$ is a submonoid of $(\setZ^d,+)$. Displacement vectors are related to \emph{Kirchhoff functions} as follows. A \emph{Kirchhoff function} for a witness graph $G=(\vec{Q},T)$ is a function $\mu:T\rightarrow\setN$ such that the functions $\operatorname{in}(\mu),\operatorname{out}(\mu):\vec{Q}\rightarrow\setN$ defined bellow are equal.
$$
\operatorname{in}(\mu)(\vec{x})=\sum_{t \in T  \cap (\vec{Q}\times \vec{A}\times \{\vec{x}\})}\mu(t)
~~~~~~~
\operatorname{out}(\mu)(\vec{x})=\sum_{t \in T \cap (\{\vec{x}\}\times \vec{A}\times \vec{Q})}\mu(t) 
$$

A Kirchhoff function $\mu:T\rightarrow \setN$ is said to be \emph{total} if $\mu(t)\geq 1$ for every $t\in T$.
\begin{lem}[Euler's Lemma]
  A function $\mu$ is a Kirchhoff function for a witness graph $G$ if and only if $\mu$ is a finite sum of Parikh images of cycles in $G$. In particular a function $\mu$ is a total Kirchhoff function if and only if $\mu$ is the Parikh image of a  total cycle.
\end{lem}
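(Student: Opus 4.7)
The easy direction is that the Parikh image of a single cycle satisfies the Kirchhoff condition: each visit to a state contributes one to both $\operatorname{in}$ and $\operatorname{out}$ at that state, and since $\operatorname{in}$ and $\operatorname{out}$ are linear in $\mu$, any finite nonnegative integer combination of such Parikh images is still a Kirchhoff function.

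For the converse I would induct on the total mass $|\mu|=\sum_{t\in T}\mu(t)$. The case $\mu=0$ is the empty sum. Otherwise, consider the support subgraph of $G$ whose transitions are the $t$ with $\mu(t)\geq 1$. The Kirchhoff equation forces every state with positive incoming $\mu$-flow to have positive outgoing $\mu$-flow, so starting from the source of any $t_0$ with $\mu(t_0)\geq 1$ and greedily extending through support-transitions, we obtain a walk as long as we like; finiteness of $\vec{Q}$ forces a state repetition, and cutting at the first repetition yields a simple cycle $c$ in $G$ whose Parikh image $\nu$ satisfies $\nu\leq \mu$ pointwise (simplicity gives $\nu(t)\leq 1$, and we only traversed transitions with $\mu(t)\geq 1$). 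Then $\mu-\nu$ is nonnegative, still Kirchhoff, and has strictly smaller mass, so the induction closes.

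For the total refinement, write $\mu=\sum_{i=1}^n\nu_i$ with $\nu_i$ the Parikh image of a cycle $c_i$ obtained as above, and define an equivalence on $\{1,\dots,n\}$ generated by $i\sim j$ whenever $c_i$ and $c_j$ share a state. I claim there is a single class: otherwise, pick a class $C$ and let $V_C$ be the union over $i\in C$ of the states visited by $c_i$; if $V_C\subsetneq\vec{Q}$, strong connectivity of $G$ yields a transition $t=(\vec{q}_1,\vec{a},\vec{q}_2)$ with $\vec{q}_1\in V_C$ and $\vec{q}_2\not\in V_C$, totality forces $\mu(t)\geq 1$ so $t$ lies in some $c_k$, and then $c_k$ meets $V_C$ at $\vec{q}_1$ and hence lies in $C$, placing $\vec{q}_2\in V_C$, a contradiction. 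So $V_C=\vec{Q}$ and every cycle shares a state with some $c_i\in C$, collapsing everything into a single class. A simple induction on $n$ now merges the cycles: two cycles sharing a state are spliced at that state into one cycle whose Parikh image is their sum, and the shared-state graph on the reduced list remains connected, yielding finally a single cycle with Parikh image $\mu$, which is total because $\mu$ is.

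The main obstacle is the last paragraph: the flow decomposition into simple cycles is standard, but the merger into a \emph{single} total cycle crucially uses both strong connectivity of $G$ and totality of $\mu$ to rule out a decomposition whose cycles split into state-disjoint blocks.
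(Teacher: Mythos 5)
Your argument is correct. Note that the paper offers no proof of this lemma at all: it is stated as the classical Euler/flow-decomposition fact, so there is nothing of the author's to compare against, and your write-up is exactly the standard argument one would supply. Both halves check out: the greedy extraction of a simple cycle from the support of a nonzero Kirchhoff function (using $\operatorname{in}(\mu)=\operatorname{out}(\mu)$ to keep extending the walk, finiteness of $\vec{Q}$ to force a repetition, and simplicity to get $\nu\leq\mu$, so induction on $\sum_t\mu(t)$ closes), and the totality refinement, where strong connectedness of $G$ together with $\mu(t)\geq 1$ for every $t\in T$ rules out a decomposition into state-disjoint blocks of cycles. Two small points worth making explicit: when splicing two cycles at a shared state $\vec{q}$ you implicitly rotate each cycle so that it is anchored at $\vec{q}$ (write it as $p_1p_2$ with $p_1$ ending at $\vec{q}$ and take $p_2p_1$), which preserves the Parikh image; and the converse half of the ``in particular'' statement (the Parikh image of a total cycle is a total Kirchhoff function) should be mentioned, though it is immediate from your first paragraph. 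The degenerate witness graph with one state and no transitions, where $\mu$ is the empty function, sits outside your merging induction but is vacuous and harmless.
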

As a direct consequence of the Euler's Lemma, we deduce that a vector $\vec{z}\in\setZ^d$ is a displacement vector of $G$ if and only if there exists a Kirchhoff function $\mu$ for $G$ satisfying the following equality:
$$\vec{z}=\sum_{t=(\vec{x},\vec{a},\vec{y})\in T}\mu(t)\vec{a}$$
In this case $\vec{z}$ is called the \emph{displacement} of $\mu$.

\begin{exa}\label{ex:group}
  Let us come back to the witness graph $\pi_L(G)$ depicted in Fig.~\ref{fig:subgraph}. A function $\mu:\pi_L(T)\rightarrow\setN$ is a Kirchhoff function for $\pi_L(G)$ if and only if $\mu(t_1)=\mu(t_2)$ where $t_1=((1,\star,\star),(-1,1,1),(0,\star,\star))$ and $t_2=((0,\star,\star),(1,-1,-1),(1,\star,\star))$. In particular the set of displacement vectors of $\pi_L(G)$ satisfies $\vec{Z}_{\pi_L(G)}=\{\vec{z}\in\setZ^3\mid \vec{z}(1)=0\wedge\vec{z}(2)+\vec{z}(3)=0\}$.
\end{exa}

The following theorem shows that the displacement vectors $\vec{z}\in\vec{Z}_G$ are displacement of Kirchhoff functions $\mu$ for $G$ such that $\norm{\infty}{\mu}=\max_{t\in T}\mu(t)$ is bounded by a polynomial in $|\vec{Q}|$, $\norm{\infty}{\vec{A}}$, and $\norm{\infty}{\vec{z}}$ with a degree depending on $d$.
\begin{thm}\label{thm:bound}
  Vectors $\vec{z}\in\vec{Z}_G$ are displacement of Kirchhoff functions $\mu$ such that the following inequality holds where $q=|\vec{Q}|$, $a=\norm{\infty}{\vec{A}}$, and $m=\norm{\infty}{\vec{z}}$:
  $$\norm{\infty}{\mu}\leq 
  (q^{d+1} a(1+2a)^d+m)^d$$
\end{thm}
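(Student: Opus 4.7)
The plan is to reduce the statement to a linear Diophantine problem over $d$ equations and then invoke Pottier's bound on minimal non-negative integer solutions. By Euler's Lemma, exhibiting a displacement vector $\vec{z} \in \vec{Z}_G$ amounts to producing a Kirchhoff function $\mu\colon T \to \setN$ satisfying $\sum_{(\vec{x},\vec{a},\vec{y})\in T} \mu(\vec{x},\vec{a},\vec{y})\,\vec{a} = \vec{z}$, so the theorem's task is to find such a $\mu$ with small $\|\mu\|_\infty$.

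Rather than apply Pottier directly to the combined Kirchhoff-plus-displacement system (which has $q+d$ equations and would produce a bound exponential in $q$), I would first exploit the structure of $G$. Since $G$ is a strongly connected witness graph, every Kirchhoff function decomposes as a non-negative integer combination $\mu = \sum_c \lambda_c\,\mu_c$ of Parikh images of simple cycles $c$ of $G$, with $\|\mu_c\|_\infty \leq 1$ and $\|\mu_c\|_1 \leq q$, whence $\|\Delta(c)\|_\infty \leq q a$. The displacement constraint becomes the purely $d$-dimensional linear Diophantine system $\sum_c \lambda_c\,\Delta(c) = \vec{z}$ in the non-negative integer unknowns $(\lambda_c)$, with coefficient matrix of $\ell^\infty$-norm at most $qa$ and right-hand side of $\ell^\infty$-norm $m$. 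Pottier's theorem then furnishes a minimal solution $\lambda$ whose norm is controlled by a polynomial of degree $d$ in $qa$ and $m$; because $\mu_c(t)\in\{0,1\}$ for every transition $t$, the recovered Kirchhoff function satisfies $\|\mu\|_\infty \leq \sum_c \lambda_c$, giving the desired bound on $\|\mu\|_\infty$.

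The main obstacle is matching the precise form $(q^{d+1}a(1+2a)^d + m)^d$ of the stated bound. The outer $d$-th power and the internal $(1+2a)^d$ factor are characteristic of Pottier's bound in dimension $d$ with coefficients of magnitude $a$ (rather than $qa$), which suggests that the decomposition above should be refined: instead of charging each simple cycle its full displacement at once, one should track contributions \emph{transition by transition}, so that the coefficients seen by Pottier are entries of individual actions in $\vec{A}$. The separate polynomial factor $q^{d+1}$ then arises from combinatorial bookkeeping over the at most $q$ transitions forming each cycle and over the $q$ possible states at which Kirchhoff balance must be verified, while the additive $m$ inside the outer power records the $\ell^\infty$-norm of the right-hand side $\vec{z}$. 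Careful bookkeeping of these constants, together with one invocation of Pottier's theorem in the refined system, is where the bulk of the technical work lies; the arithmetic reduction to the stated closed form is routine once the correct system has been identified.
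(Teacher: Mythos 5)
Your overall strategy coincides with the paper's: decompose Kirchhoff functions into Parikh images of simple cycles, reduce the displacement constraint to a linear Diophantine system with only $d$ rows, invoke Pottier, and recover $\mu$ from the cycle multiplicities using $\mu(t)\leq\norm{1}{\lambda}$ because each simple cycle uses every transition at most once. However, there is a genuine gap exactly where you locate ``the bulk of the technical work'', and the refinement you sketch to close it is a wrong turn. The missing ingredient is not a transition-by-transition analysis but a counting argument: the distinct displacements of simple cycles form a set $\vec{Z}\subseteq\setZ^d\moins\{\vec{0}\}$ with $\norm{\infty}{\vec{Z}}\leq qa$, hence of cardinality $k\leq(1+2qa)^d-1$; the form of Pottier's theorem used here bounds the $1$-norm of minimal solutions of $H\vec{v}=\vec{0}$ by $(1+\norm{1,\infty}{H})^r$, where $\norm{1,\infty}{H}$ is the maximal \emph{row sum} and $r\leq d$ is the rank. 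Taking $H$ to have as columns the $k$ simple-cycle displacements together with $-\vec{z}$ (the paper homogenizes the system with one extra unknown and then selects a minimal solution whose last coordinate equals $1$; this step, or an explicitly inhomogeneous form of Pottier's result, is needed since the quoted statement concerns $H\vec{v}=\vec{0}$), one gets $\norm{1,\infty}{H}\leq kqa+m$, and the arithmetic $1+kqa+m\leq q^{d+1}a(1+2a)^d+m$ (valid once $qa\geq 1$; the case $a=0$ is trivial) yields exactly the stated bound. So the factor $(1+2a)^d$ records the \emph{number of columns}, not ``coefficients of magnitude $a$'', and your claim that Pottier gives a $1$-norm bound polynomial in $qa$ and $m$ alone, independent of the number of simple-cycle displacements, is not what the theorem you rely on provides.

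Moreover, the refinement you propose --- tracking contributions transition by transition with Kirchhoff balance verified at the $q$ states --- would reintroduce precisely the $q$ balance equations you correctly discarded at the outset: the rank of that combined system can be as large as $d+q-1$, so Pottier's exponent would depend on $q$ and the bound could no longer have the form $(\cdot)^d$. The repair is simple: keep your second-paragraph reduction, add the bound $k\leq(1+2qa)^d-1$ on the number of simple-cycle displacements, apply Pottier to the homogenized $d$-row system, and conclude via $\norm{\infty}{\mu}\leq\norm{1}{\vec{v}}\leq(1+kqa+m)^d\leq(q^{d+1}a(1+2a)^d+m)^d$.
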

\begin{proof}
  We first recall a ``Frobenius  theorem'' proved in \cite{P-RTA91}. Let $H\in\setZ^{d\times n}$ be a matrix and let us denote by $h_{i,j}$ for each $i\in\{1,\ldots,d\}$ and $j\in\{1,\ldots,n\}$ the element of $H$ at position $(i,j)$. We denote by $\norm{1,\infty}{H}$ the natural number $\max_{1\leq i\leq d}\sum_{j=1}^n|h_{i,j}|$. Given a vector $\vec{v}\in\setN^n$, we introduce the natural number $\norm{1}{\vec{v}}=\sum_{j=1}^n\vec{v}(j)$. Let $\vec{V}$ be the set of vectors $\vec{v}\in\setN^n$ such that $H\vec{v}=\vec{0}$. Recall that $\vec{V}$ is a submonoid of $(\setN^n,+)$ generated by the finite set $\min(\vec{V}\moins\{\vec{0}\})$ of minimal elements for $\leq$. From \cite{P-RTA91} we deduce that vectors $\vec{v}\in \min(\vec{V}\moins\{\vec{0}\})$ satisfy the following inequality where $r$ is the rank of $H$:
  $$\norm{1}{\vec{v}}\leq (1+\norm{1,\infty}{H})^r$$
  
  \medskip
  
  Observe that if $a=0$ then $\vec{z}=\vec{0}$ and the theorem is proved with the Kirchhoff function $\mu$ defined by $\mu(t)=0$ for every $t\in T$. So we can assume that $a\geq 1$. Since every cycle labelled by a word $\sigma$ can be decomposed into a finite sequence of simple cycles labelled by words $\sigma_1,\ldots,\sigma_k$ such that $\Delta(\sigma)=\sum_{j=1}^k\Delta(\sigma_j)$ we deduce that the set of displacement vectors $\vec{Z}_G$ is the submonoid of $(\setZ^d,+)$ generated by the set $\vec{Z}$ of non-zero vectors $\vec{z}=\Delta(\sigma)$ where $\sigma$ is the label of a simple cycle. Since the length of a simple cycle is bounded by the cardinal $q$ of $\vec{Q}$, we get $\norm{\infty}{\vec{Z}}\leq q a$. As a corollary we deduce that the cardinal $k$ of $\vec{Z}$ is bounded by $k\leq (1+2q a )^d-1$ (the $-1$ comes from the fact that vectors in $\vec{Z}$ are non-zero). 

\medskip

Let us consider a vector $\vec{z}\in\vec{Z}_G$ and let us introduce a whole enumeration $\vec{z}_1,\ldots,\vec{z}_k$ of the vectors in $\vec{Z}$ and the following set $\vec{V}$ where $n=k+1$:
$$\vec{V}=\{\vec{v}\in\setN^n \mid \bigwedge_{i=1}^d\sum_{j=1}^k\vec{v}(j)\vec{z}_j(i)-\vec{v}(n)\vec{z}(i)=0\}$$
We observe that $\vec{V}$ is associated to a matrix $H\in\setZ^{d\times n}$. The rank of $H$ is bounded by $d$ and $\norm{1,\infty}{H}\leq k q a+m$. We deduce from the Frobenius theorem that vectors $\vec{v}\in \min(\vec{V}\moins\{\vec{0}\})$ satisfy the following inequality:
$$\norm{1}{\vec{v}}\leq (1+k q a + m)^d\leq (q^{d+1} a (1 +2 a)^d +m)^d$$
Since $\vec{z}\in\vec{Z}_G$ and $\vec{Z}_G$ is the submonoid generated by $(\setZ^d,+)$ generated by $\vec{z}_1,\ldots,\vec{z}_k$, we deduce that there exists $v_1,\ldots,v_k\in\setN$ such that $\vec{z}=\sum_{j=1}^kv_j\vec{z}_j$. Observe that the vector $\vec{v}\in\setN^n$ defined by $\vec{v}(j)=v_j$ if $j\in\{1,\ldots,k\}$ and $\vec{v}(n)=1$ is in $\vec{V}$. Hence, there exists $\vec{v}\in\vec{V}$ such that $\vec{v}(n)=1$. In particular there exists another vector $\vec{v}\in \min(\vec{V}\moins\{\vec{0}\})$ such that $\vec{v}(n)=1$. Observe that for every $j\in\{1,\ldots,k\}$ there exists a function $\lambda_j$ that is the Parikh image of a simple cycle such that $\vec{z}_j$ is the displacement of $\lambda_j$. We introduce the Kirchhoff function $\mu=\sum_{j=1}^k\vec{v}(j)\lambda_j$. Since $\vec{v}\in\vec{V}$ and $\vec{v}(n)=1$ we deduce that the displacement of $\mu$ is $\vec{z}$. The theorem is proved by observing that
$\mu(t)
= \sum_{j=1}^k\vec{v}(j) \lambda_j(t)
\leq \norm{1}{\vec{v}}$ since $\lambda_j(t)\in\{0,1\}$.
\end{proof}

\section{Reversible Witness Graphs}\label{sec:rev}
A witness graph $G$ is said to be \emph{reversible} if for every path $\vec{x}\xrightarrow{u}_G\vec{y}$ there exists a path $\vec{y}\xrightarrow{v}_G\vec{x}$ such that $\Delta(u) +\Delta(v)=\vec{0}$. Observe that standard witness graphs are reversible since the condition $\Delta(u) +\Delta(v)=\vec{0}$ is implied by the two paths.

\begin{exa}
  The witness graphs depicted in Fig.~\ref{fig:subgraph} are
  reversible, but the witness graph $(\{\star\},\{(\star,1,\star)\})$ is not.
\end{exa}

Let us recall that a submonoid $\vec{Z}$ of $(\setZ^d,+)$ is said to be a \emph{subgroup} if $-\vec{z}\in\vec{Z}$ for every $\vec{z}\in\vec{Z}$. The following lemma provides two characterizations of the reversible witness graphs.
\begin{lem}\label{lem:rev}
  A witness graph $G$ is reversible if and only if $\vec{Z}_G$ is a subgroup of $(\setZ^d,+)$ if and only if the zero vector is the displacement of a total Kirchhoff function.
\end{lem}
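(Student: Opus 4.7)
The plan is to prove the cyclic chain (i)$\Rightarrow$(ii)$\Rightarrow$(iii)$\Rightarrow$(i), where (i) is reversibility of $G$, (ii) is ``$\vec{Z}_G$ is a subgroup of $(\setZ^d,+)$'', and (iii) is ``$\vec{0}$ is the displacement of a total Kirchhoff function''. The first two implications will be short; the last is the substantive one.

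For (i)$\Rightarrow$(ii), I would take $\vec{z}\in\vec{Z}_G$, write it as a finite sum $\sum_i\Delta(\sigma_i)$ where each $\sigma_i$ labels some cycle $\vec{q}_i\xrightarrow{\sigma_i}_G\vec{q}_i$, and apply reversibility to each of these cycles to obtain cycles $\vec{q}_i\xrightarrow{\tau_i}_G\vec{q}_i$ with $\Delta(\tau_i)=-\Delta(\sigma_i)$. Then $-\vec{z}=\sum_i\Delta(\tau_i)\in\vec{Z}_G$, proving closure under negation.

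For (ii)$\Rightarrow$(iii), strong connectivity of $G$ together with the finiteness of $T$ yields a total cycle (concatenate shortest reconnecting paths between the endpoints of an enumeration of $T$); its Parikh image $\mu_0$ is a total Kirchhoff function of some displacement $\vec{z}_0\in\vec{Z}_G$. By (ii) we have $-\vec{z}_0\in\vec{Z}_G$, and Euler's Lemma then produces a Kirchhoff function $\lambda$ with displacement $-\vec{z}_0$. The sum $\mu_0+\lambda$ remains total (because $\mu_0$ already is) and has displacement $\vec{0}$, giving (iii).

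For (iii)$\Rightarrow$(i), which I expect to be the main obstacle, I would start from a total Kirchhoff function $\mu_0$ with displacement $\vec{0}$, take an arbitrary path $\vec{x}\xrightarrow{u}_G\vec{y}$ with Parikh image $\mu_u$, set $N:=1+\max_{t\in T}\mu_u(t)$, and define $\mu:=N\mu_0-\mu_u$. A direct check then shows that $\mu(t)\geq 1$ for every $t\in T$ and that $\operatorname{in}(\mu)(\vec{z})-\operatorname{out}(\mu)(\vec{z})$ vanishes everywhere except for value $+1$ at $\vec{z}=\vec{x}$ and $-1$ at $\vec{z}=\vec{y}$, i.e.\ the flow-balance condition characterising the Parikh image of a path from $\vec{y}$ to $\vec{x}$. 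The delicate step is then to realise $\mu$ as an actual path: I plan to adjoin to $G$ a single virtual edge from $\vec{x}$ to $\vec{y}$ with multiplicity $1$, observe that the extended counting function is a Kirchhoff function on an augmented multigraph that is still strongly connected and whose support is all of $T$ together with the virtual edge, and invoke the standard Eulerian-circuit theorem for strongly connected multigraphs with balanced in- and out-degrees. Removing the virtual edge from that circuit produces the desired path $v$ from $\vec{y}$ to $\vec{x}$ with Parikh image $\mu$, whose displacement is $N\cdot\vec{0}-\Delta(u)=-\Delta(u)$, so that $\Delta(u)+\Delta(v)=\vec{0}$ and $G$ is reversible.
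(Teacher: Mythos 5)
Your proposal is correct, and for the first two implications it coincides with the paper's proof: reversibility applied cycle-by-cycle gives $-\vec{Z}_G=\vec{Z}_G$, and a total cycle (existing by strong connectedness) plus a Kirchhoff function of displacement $-\vec{z}_0$ gives a total Kirchhoff function of displacement $\vec{0}$. Where you diverge is in (iii)$\Rightarrow$(i). The paper never works with the unbalanced flow $N\mu_0-\mu_u$: it first closes the path $\vec{x}\xrightarrow{u}_G\vec{y}$ into a cycle $\vec{x}\xrightarrow{u\alpha}_G\vec{x}$ using a return path $\vec{y}\xrightarrow{\alpha}_G\vec{x}$ provided by strong connectedness, takes the Parikh image $\lambda$ of that cycle, sets $\mu'=m\mu-\lambda$ with $m=1+\norm{\infty}{\lambda}$, and invokes its own Euler's Lemma (total Kirchhoff function $=$ Parikh image of a total cycle) to realize $\mu'$ as a cycle $\vec{x}\xrightarrow{\beta}_G\vec{x}$, finishing with $v=\alpha\beta$; this keeps the whole argument inside Kirchhoff functions on $G$, so the lemma stated in Section~\ref{sec:euler} applies verbatim. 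You instead subtract the path's Parikh image directly, getting a flow that is balanced except at $\vec{x}$ and $\vec{y}$, and repair it by adjoining a virtual edge from $\vec{x}$ to $\vec{y}$ and quoting the general Eulerian-circuit theorem for directed multigraphs; this is sound (the augmented graph is balanced, its support is strongly connected, and deleting the single virtual edge from the circuit leaves the desired path from $\vec{y}$ to $\vec{x}$ of displacement $-\Delta(u)$), but note it is not literally covered by the paper's Euler's Lemma, since the augmented graph is not a witness graph. Interestingly, your virtual-edge device is exactly the trick the paper deploys later, in Lemma~\ref{lem:tech}, with the transition $t_\bullet$; so your variant buys a slightly more direct construction at the cost of an external (standard) Euler theorem, while the paper's route stays self-contained here and postpones the augmentation trick to where it is really needed. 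One small point to make explicit in your write-up: when $\vec{x}=\vec{y}$ the imbalance description degenerates ($\mu_u$ is then itself a Kirchhoff function), but your argument goes through unchanged with the virtual edge a self-loop, or directly via Euler's Lemma.
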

\begin{proof}
  Assume first that $G$ is reversible and let us prove that $\vec{Z}_G$ is a subgroup of $(\setZ^d,+)$. Let us consider a cycle $\vec{x}\xrightarrow{u}_G\vec{x}$. Since $G$ is reversible, there exists a cycle $\vec{x}\xrightarrow{v}_G\vec{x}$ such that $\Delta(u)+\Delta(v)=\vec{0}$. We deduce that $-\vec{Z}_G=\vec{Z}_G$ since vectors in $\vec{Z}_G$ are finite sums of vectors $\Delta(u)$ where $u$ is the label of a cycle in $G$. Therefore $\vec{Z}_G$ is a subgroup of $\setZ^d$.

  Now let us assume that $\vec{Z}_G$ is a subgroup of $(\setZ^d,+)$ and let us prove that the zero vector is the displacement of a total Kirchhoff function. Since $G$ is strongly connected, there exists a total cycle $\vec{x}\xrightarrow{u}_G\vec{x}$. Observe that $\vec{z}=\Delta(u)$ is in $\vec{Z}_G$. Since $\vec{Z}_G$ is a subgroup we deduce that $-\vec{z}\in\vec{Z}_G$. Hence $-\vec{z}$ is the displacement of a Kirchhoff function $\lambda$. Let $\lambda'$ be Parikh image of  $\vec{x}\xrightarrow{u}_G\vec{x}$ and observe that $\mu=\lambda+\lambda'$ is a total Kirchhoff function. Moreover the displacement of $\mu$ is $-\vec{z}+\vec{z}=\vec{0}$.

  Finally, let us assume that the zero vector is the displacement of a total Kirchhoff function $\mu$ and let us prove that $G$ is reversible. Let us consider a path $\vec{x}\xrightarrow{u}_G\vec{y}$. Since $G$ is strongly connected, there exists a path $\vec{y}\xrightarrow{\alpha}_G\vec{x}$. Let us consider the Parikh image $\lambda$ of the cycle $\vec{x}\xrightarrow{u \alpha}_G\vec{x}$ and let $m=1+\norm{\infty}{\lambda}$. We observe that $\mu'=m\mu-\lambda$ is a total Kirchhoff function and the Euler's Lemma shows that $\mu'$ is the Parikh image of a cycle $\vec{x}\xrightarrow{\beta}_G\vec{x}$. From $\mu'=m\mu-\lambda$ we deduce that $\Delta(\beta)=m\vec{0}-\Delta(u\alpha)$. Let us consider $v=\alpha\beta$ and observe that $\vec{y}\xrightarrow{v}_G\vec{x}$ and $\Delta(u)+\Delta(v)=\vec{0}$. Thus $G$ is reversible.
\end{proof}

The following theorem shows that if $G$ is a reversible witness graph then the zero vector is the displacement of a total Kirchhoff function $\mu$ such $\norm{\infty}{\mu}$ can be bounded by a polynomial in $|\vec{Q}|$ and $\norm{\infty}{\vec{A}}$ with a degree depending on $d$.
\begin{thm}\label{thm:reversible}
  Let $G$ be a reversible witness graph. The zero vector is the displacement of a total Kirchhoff function $\mu$ such that the following inequality holds where $q=|\vec{Q}|$ and $a=\norm{\infty}{\vec{A}}$:
  $$\norm{\infty}{\mu}\leq (q(1+2a))^{d(d+1)}$$
\end{thm}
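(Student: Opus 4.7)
The plan is to make the existential argument in the proof of Lemma~\ref{lem:rev} quantitative. Concretely, write the desired $\mu$ as $\mu = \lambda' + \lambda$, where $\lambda'$ is a \emph{combinatorially constructed} total Kirchhoff function (possibly with nonzero displacement $\vec{z} = \Delta(\lambda')$), and $\lambda$ is a Kirchhoff function compensating its displacement. The existence of $\lambda$ is guaranteed by the subgroup characterization in Lemma~\ref{lem:rev}, which gives $-\vec{z}\in\vec{Z}_G$, and its magnitude is controlled by Theorem~\ref{thm:bound}.

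First I would construct $\lambda'$. Since $G=(\vec{Q},T)$ is strongly connected, for each transition $t\in T$ there is a simple cycle $C_t$ through $t$, of length at most $q$. Setting $\lambda' = \sum_{t\in T}\operatorname{Parikh}(C_t)$ yields a finite sum of Parikh images of cycles, hence a Kirchhoff function by Euler's Lemma, and $\lambda'(t)\geq 1$ for every $t$. Because each $C_t$ is simple, one gets the uniform bound $\norm{\infty}{\lambda'} \leq |T|$ and $\norm{\infty}{\Delta(\lambda')} \leq a \cdot \sum_{t\in T}|C_t| \leq a\,q\,|T|$. Using the elementary count $|T|\leq q\,|\vec{A}|\leq q(1+2a)^d$, both quantities are polynomial in $q$ and $(1+2a)^d$.

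Next I would apply Theorem~\ref{thm:bound} to the vector $-\Delta(\lambda')\in\vec{Z}_G$. Setting $m = \norm{\infty}{\Delta(\lambda')}$, this produces a Kirchhoff function $\lambda$ with displacement $-\Delta(\lambda')$ satisfying
\[
\norm{\infty}{\lambda}\;\leq\;\bigl(q^{d+1}\,a\,(1+2a)^d + m\bigr)^d.
\]
The two summands inside the parenthesis are of comparable order, and the dominant term $q^{d+1}a(1+2a)^d$ is bounded by $(q(1+2a))^{d+1}$. Raising to the $d$-th power yields a bound of the form $(q(1+2a))^{d(d+1)}$ up to a constant. Taking $\mu = \lambda + \lambda'$, the function $\mu$ is a Kirchhoff function (sum of two Kirchhoff functions), is total (since $\lambda'\geq\vec{1}$), and has displacement $\Delta(\lambda) + \Delta(\lambda') = \vec{0}$.

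The main obstacle is the final bookkeeping to absorb all constant factors into the stated bound $(q(1+2a))^{d(d+1)}$. The bound has no slack apart from the base $q(1+2a)$ and the exponent $d(d+1)$, so factors such as the $2^d$ arising from summing two comparable terms before raising to the $d$-th power, together with the additive contribution of $\norm{\infty}{\lambda'}$, must be swallowed using $q(1+2a)\geq 3$ and the fact that the exponent $d(d+1)$ carries one extra power of $q(1+2a)$ beyond what Theorem~\ref{thm:bound} directly provides. A careful estimation shows that this slack is exactly sufficient to recover the stated inequality.
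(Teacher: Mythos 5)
Your proposal is correct and follows essentially the same route as the paper: the paper also takes $\lambda=\sum_{t\in T}$ (Parikh images of simple cycles through each transition), uses Lemma~\ref{lem:rev} to get $-\Delta(\lambda)\in\vec{Z}_G$, bounds the compensating Kirchhoff function via Theorem~\ref{thm:bound} with $m\leq q^2a(1+2a)^d$, and sets $\mu=\lambda+\lambda'$. The final bookkeeping you deferred does close exactly as you anticipate, since after factoring out $(q^{d+1}(1+2a)^d)^d$ the required inequality reduces to $(2a)^d+1\leq(1+2a)^d$.
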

\proof
  Since $G$ is strongly connected, every transition $t\in T$ occurs in at least one simple cycle. We denote by $\lambda_t$ the Parikh image of such a simple cycle and we introduce the Kirchhoff function $\lambda=\sum_{t\in T}\lambda_t$. We have $\lambda(t)\in\{1,\ldots,|T|\}$ for every $t\in T$. We introduce the displacement $\vec{z}$ of $\lambda$. Since $G$ is reversible, we deduce that $-\vec{z}$ is the displacement vector of a Kirchhoff function for $G$ by Lemma~\ref{lem:rev}. As $\norm{\infty}{\vec{z}}\leq |T| q a$, $|T|\leq q |\vec{A}|$, and $|\vec{A}|\leq (1+2a)^d$ we deduce that $\norm{\infty}{\vec{z}}\leq q^2a(1+2a)^d$. Theorem~\ref{thm:bound} shows that $-\vec{z}$ is the displacement of a Kirchhoff function $\lambda'$ satisfying the following inequalities:
  $$
  \norm{\infty}{\lambda'}
  \leq (q^{d+1} a(1+2 a)^d +q^2 a (1+2 a )^d)^d 
  \leq (q^{d+1}2a(1+2a)^d)^d
  $$
  Let us consider the total Kirchhoff function $\mu=\lambda+\lambda'$. Observe that the displacement of $\mu$ is the zero vector and since $\norm{\infty}{\lambda}\leq |T|\leq q(1+2 a)^d\leq (q^{d+1}(1+2a)^d)^d$ we get the theorem with:
\[
  \norm{\infty}{\mu}
  \leq (q^{d+1}2a(1+2a)^d)^d +  (q^{d+1}(1+2a)^d)^d
  \leq (q(1+2a))^{d(d+1)}\eqno{\qEd}
\]

\section{Extractors}\label{sec:relax}
In this section we introduce a way for extracting ``large'' components of configurations. An \emph{extractor} is a non increasing sequence $\lambda=(\lambda_n)_{1\leq n\leq d}$ of natural numbers $\lambda_n\in\setN$. Let $\vec{X}\subseteq\setN_I^d$. An \emph{excluding} set for $(\lambda,\vec{X})$ is a set of indexes $J$ such that $\vec{x}(i)<\lambda_{|J|+1}$ for every $i\not\in J$ and for every $\vec{x}\in\vec{X}$ (notice that even if $\lambda_{d+1}$ is not defined, when $|J|=d$ the domain of the universal quantifier ``for every $i\not\in J$'' is empty). Since $\lambda$ is non increasing we deduce that the class of excluding sets for a couple $(\lambda,\vec{X})$ is stable by intersection. As this class contains $\{1,\ldots,d\}$ we deduce that there exists a \emph{unique minimal excluding set} $J$ for $(\lambda,\vec{X})$. By minimality of this set we deduce that for every $i\in J$ there exists $\vec{x}\in\vec{X}$ such that $\vec{x}(i)\geq \lambda_{|J|}$ (notice once again that even if $\lambda_0$ is not defined, when $|J|=0$ the domain of the universal quantifier ``for every $i\in J$'' is empty). We denote $\lambda(\vec{X})$ the set $\pi_J(\vec{X})$ where $J$ is the minimal excluding set for $(\lambda,\vec{X})$.

\begin{exa}
  Let $\lambda=(5,3,2)$ be an extractor. We have $\lambda(\{(1,8,1)\})=\{(1,\star,1)\}$, and $\lambda(\{(1,8,1),(3,1,1)\})=\{(\star,\star,1)\}$.
\end{exa}

A set $\vec{X}\subseteq \setN_I^d$ is said to be \emph{normalized} for $\lambda$ if $\lambda(\vec{X})=\vec{X}$. As a direct consequence of the following lemma we deduce that $\lambda(\vec{X})$ is normalized for $\lambda$ for every set $\vec{X}\subseteq \setN_I^d$. We say that $\vec{x}\in\setN_I^d$ is \emph{normalized} for $\lambda$ if $\{\vec{x}\}$ is normalized for $\lambda$, i.e $\lambda(\{\vec{x}\})=\{\vec{x}\}$ or equivalently $\vec{x}(i)<\lambda_{|I|+1}$ for every $i\not\in I$. Observe that if every state $\vec{x}\in\vec{X}$ is normalized then $\lambda(\vec{X})=\vec{X}$.
\begin{lem}\label{lem:excluding}
  Let $\vec{X}\subseteq \setN_I^d$ and let $L$ be a set of indexes included in the minimal excluding set of $(\lambda,\vec{X})$. Then $\lambda(\vec{X})=\lambda(\pi_L(\vec{X}))$.
\end{lem}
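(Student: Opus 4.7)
The plan is to show that the minimal excluding set of $(\lambda,\pi_L(\vec{X}))$ coincides with the minimal excluding set of $(\lambda,\vec{X})$; the desired equality $\lambda(\vec{X}) = \lambda(\pi_L(\vec{X}))$ then follows because $L$ is already absorbed by this common set. Let $J$ denote the minimal excluding set of $(\lambda,\vec{X})$ and let $J'$ denote the minimal excluding set of $(\lambda,\pi_L(\vec{X}))$.

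First I would show $J' \subseteq J$ by checking that $J$ itself is an excluding set for $(\lambda,\pi_L(\vec{X}))$. For any $i\notin J$ the hypothesis $L\subseteq J$ gives $i\notin L$, hence $\pi_L(\vec{x})(i) = \vec{x}(i) < \lambda_{|J|+1}$ for every $\vec{x}\in\vec{X}$. Thus $J$ is excluding for the projected set, and by minimality of $J'$ we get $J'\subseteq J$.

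Next I would show the reverse inclusion $J \subseteq J'$ by checking that $J'$ is an excluding set for $(\lambda,\vec{X})$. The key observation is that $L\subseteq J'$: indeed, for $i\in L$ we have $\pi_L(\vec{x})(i)=\star$, and since $\star\not< \lambda_{|J'|+1}$ (because $\star$ is the maximum of the extended order), the excluding condition forces every such $i$ into $J'$. Now for any $i\notin J'$ we have $i\notin L$, so $\vec{x}(i) = \pi_L(\vec{x})(i) < \lambda_{|J'|+1}$ holds for every $\vec{x}\in\vec{X}$, showing $J'$ is excluding for $(\lambda,\vec{X})$. Minimality of $J$ then gives $J\subseteq J'$.

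Combining both inclusions yields $J=J'$. Finally, using $L\subseteq J=J'$, we compute
\[
\lambda(\pi_L(\vec{X})) \;=\; \pi_{J'}(\pi_L(\vec{X})) \;=\; \pi_{J\cup L}(\vec{X}) \;=\; \pi_J(\vec{X}) \;=\; \lambda(\vec{X}),
\]
which proves the lemma. There is no genuine obstacle here; the only subtlety is keeping careful track of the roles played by $I$, $L$ and $J$ through the projection, and in particular exploiting the fact that components projected to $\star$ are forced into any excluding set because $\star$ exceeds every $\lambda_n$.
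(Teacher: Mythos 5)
Your proof is correct and follows essentially the same route as the paper: both establish that the minimal excluding sets $J$ of $(\lambda,\vec{X})$ and $J'$ of $(\lambda,\pi_L(\vec{X}))$ coincide by a mutual-inclusion argument and then conclude $\lambda(\vec{X})=\lambda(\pi_L(\vec{X}))$. The only cosmetic difference is that for $J\subseteq J'$ the paper invokes the componentwise inequality $\vec{x}\leq\pi_L(\vec{x})$ (and disposes of the empty case up front), whereas you first force $L\subseteq J'$ via the $\star$-components and then use $\vec{x}(i)=\pi_L(\vec{x})(i)$ for $i\notin L$ --- the same substance.
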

\begin{proof}
  Note that if $\vec{X}$ is empty the result is immediate so we can assume that $\vec{X}$ is non empty. Let $J$ be the minimal excluding set of $(\lambda,\vec{X})$ and observe that $J$ is an excluding set for $\vec{X}'=\pi_L(\vec{X})$. In particular the minimal excluding set $J'$ for $\vec{X}'$ satisfies $J'\subseteq J$. Since $J'$ is an excluding set of $(\lambda,\vec{X}')$ we deduce that $\vec{x}'(i)<\lambda_{|J'|+1}$ for every $i\not\in J'$. Hence $\pi_L(\vec{x})(i)<\lambda_{|J'|+1}$ for every $\vec{x}\in\vec{X}$. As $\vec{x}\leq\pi_L(\vec{x})$ we deduce that $J'$ is an excluding set of $(\lambda,\vec{X})$. By minimality of $J$ we get the other inclusion $J\subseteq J'$. Thus $J=J'$ and we have proved that $\lambda(\vec{X})=\lambda(\pi_L(\vec{X}))$.
\end{proof}

\section{Pumpable Configurations}\label{sec:pump}
In this section we show that for arbitrary witness graph $G$, there exists a set $J$ of indexes such that the number of states of $\pi_J(G)$ is ``small'' and such that states with ``small'' size of $G$ are pumpable by ``short'' cycles of $\pi_J(G)$. The proof of this result is inspired by the Rackoff ideas \cite{rackoff78}. All other results or definitions introduced in this section are not used in the sequel.

\begin{thm}\label{thm:pumpable}
  Let $G$ be a witness graph with a set of states $\vec{Q}\subseteq \setN_I^d$, and let $s\in\setN_{>0}$ be a positive integer. We introduce the positive integer $x=(1+\norm{\infty}{\vec{A}})s$. There exists a set of indexes $J$ such that the number of states of $\pi_J(G)$ is bounded by $x^{d^d}$ and such that every state $\vec{q}\in\vec{Q}$ such that $\norm{\infty}{\vec{q}}<s$ is forward and backward pumpable by cycles of $\pi_J(G)$ with lengths bounded by $d x^{d^d}$.
\end{thm}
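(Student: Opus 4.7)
The plan is to proceed by induction on $e = d - |I|$, the number of non-$\star$ components of configurations in $\vec{Q}$, in the spirit of Rackoff's coverability analysis. The argument is driven by a non-increasing extractor $\lambda = (\lambda_j)_{1 \leq j \leq d}$ whose values are calibrated by a recursion of the form $\lambda_{j-1} \approx \lambda_j^{\,j} \cdot a$ with base $\lambda_d$ close to $x = (1+a)s$, chosen so that after $d$ levels $\lambda_1 \leq x^{d^d}$. The base case $e = 0$ is immediate: $\vec{Q}$ has a unique all-$\star$ element, taking $J = \emptyset$ gives $|\pi_J(\vec{Q})| = 1$, and any cycle through the unique state (which exists since $G$ is strongly connected) is trivially forward- and backward-pumpable with limit equal to itself.

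For the inductive step, set $J$ to be the minimal excluding set of $(\lambda, \vec{Q})$. By the defining property of extractors, every non-$\star$ component of any state of $\pi_J(\vec{Q})$ is strictly below $\lambda_{|J|+1}$, whence $|\pi_J(\vec{Q})| \leq \lambda_{|J|+1}^{d - |I \cup J|} \leq x^{d^d}$, securing the state-count bound. For the pumping cycles of a small $\vec{q}$ with $\norm{\infty}{\vec{q}} < s$, Theorem~\ref{thm:bound} applied to $\pi_J(G)$ (which by construction has few states and actions of magnitude $a$) produces, for any prescribed displacement which is $\vec{0}$ off $J$ and strictly positive (respectively negative) on $J \setminus I$, a Kirchhoff function of bounded weight; Euler's Lemma realises it as a cycle $\pi_J(\vec{q}) \xrightarrow{\sigma}_{\pi_J(G)} \pi_J(\vec{q})$ of length within $d\, x^{d^d}$, yielding the candidate forward (resp.\ backward) pumping cycle.

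The principal obstacle is verifying that $\sigma$ is forward (respectively backward) iterable from $\vec{q}$: since $\vec{q}$ has its $J$-components bounded by $s$, the raw lift of $\sigma$ from $\pi_J(\vec{q})$ to $\vec{q}$ may dip below zero at some intermediate configuration on a $J$-coordinate. The Rackoff-inspired remedy is to construct $\sigma$ not as a monolithic cycle but as an ordered composition of smaller Kirchhoff-function pieces, each prefaced by a short excursion that raises the relevant $J$-coordinates above the threshold $|\sigma|\cdot a$ demanded by Lemma~\ref{lem:holding}; such an excursion is supplied by the inductive hypothesis applied to the sub-witness-graph of $\pi_{J'}(G)$ for a proper subset $J' \subsetneq J$, using the minimality of $J$ (which guarantees, for each $i \in J$, a witness state $\vec{q}_i \in \vec{Q}$ with $\vec{q}_i(i) \geq \lambda_{|J|}$, reachable from $\vec{q}$ by strong connectedness of $G$). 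Once the $J$-components have been so raised, the lift of the piece is safe by Lemma~\ref{lem:holding}, and the symmetric ``undoing'' of the priming keeps the whole path a genuine lift of $\sigma$. The calibration $\lambda_{j-1} \approx \lambda_j^{\,j}\cdot a$ is precisely what ensures that every inductive excursion fits within the allotted budget, and the tower $x^{d^d}$ emerges from nesting this recursion $d$ times.
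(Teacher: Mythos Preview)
Your scaffolding matches the paper---an adapted extractor $\lambda$, the minimal excluding set $J$, and the state-count bound $|\pi_J(\vec{Q})|\leq\lambda_{|J|+1}^{d-|J|}\leq x^{d^d}$. But the way you manufacture the pumping cycle is not the paper's, and it has a real gap.

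You invoke Theorem~\ref{thm:bound} on $\pi_J(G)$ for ``any prescribed displacement which is $\vec{0}$ off $J$ and strictly positive on $J\setminus I$''. Theorem~\ref{thm:bound} only applies to vectors already in $\vec{Z}_{\pi_J(G)}$; you never argue that such a vector is the displacement of some cycle of $\pi_J(G)$. This is not automatic: with $\vec{A}=\{(1,-1),(-1,1)\}$ and $J=\{1,2\}$ one has $\vec{Z}_{\pi_J(G)}=\setZ\cdot(1,-1)$, which contains no vector positive on both coordinates. The existence of a suitable displacement is in fact the substantive content of the theorem, so assuming it begs the question. Your subsequent lifting plan (``ordered composition of Kirchhoff pieces, each primed by an excursion supplied by the inductive hypothesis on $\pi_{J'}(G)$ for $J'\subsetneq J$'') is also not well-formed: the inductive hypothesis yields pumping cycles in a \emph{further} projection, not runs in the unprojected system that raise specified $J$-coordinates, and only the latter lets you appeal to Lemma~\ref{lem:holding}.

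The paper proceeds differently and does not use Theorem~\ref{thm:bound} here at all. It proves a dedicated lemma (Lemma~\ref{lem:forward}) by induction on $|J|-|I|$: from any $\vec{q}\in\vec{Q}$ one walks in $G$ to the nearest non-normalized state $\vec{p}$ (short, since normalized states are few), takes $K$ to be the minimal excluding set of $\{\vec{p}\}$ (so $I\subsetneq K\subseteq J$ and $\vec{p}(k)\geq\lambda_{|K|}$ for $k\in K$), and applies the induction to $\pi_K(G)$ and $\pi_K(\vec{p})$; the adapted-extractor inequality $\lambda_{|K|}\geq a\sum_{|K|<n\leq|J|}\lambda_n^{d+1-n}+\lambda_{|J|}$ then allows the inductively obtained run to be lifted back through the $K$-coordinates via Lemma~\ref{lem:holding}. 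This produces an honest run $\vec{q}\xrightarrow{\sigma}\vec{y}$ with $\vec{y}(j)\geq\lambda_{|J|}$ for all $j\in J$ and $\pi_J(\vec{q})\xrightarrow{\sigma}_{\pi_J(G)}\pi_J(\vec{y})$. Only then does one append a shortest path $\pi_J(\vec{y})\xrightarrow{u}_{\pi_J(G)}\pi_J(\vec{q})$, lifted once more by Lemma~\ref{lem:holding}; the concatenation $\sigma u$ is the pumping word, and the hypothesis $\norm{\infty}{\vec q}<s=\lambda_d$ is used at the very end to check that the $J$-coordinates of the endpoint strictly exceed those of $\vec q$, so that the forward limit is exactly $\pi_J(\vec q)$.
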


Such a set $J$ is obtained by introducing the class of \emph{adapted extractors}. An extractor $\lambda$ is said to be \emph{adapted} if the following inequality holds for every $n\in\{2,\ldots,d\}$:
$$\lambda_{n-1}\geq \lambda_{n}^{d-n+1}\norm{\infty}{\vec{A}} + \lambda_{n}$$

\begin{lem}\label{lem:forward}
  Let $\lambda$ be an adapted extractor, $G$ be a witness graph with a set of states $\vec{Q}\subseteq \setN_I^d$, and let $J$ be the minimal excluding set for $(\lambda,\vec{Q})$. For every state $\vec{q}\in\vec{Q}$ there exists a run $\vec{q}\xrightarrow{u}\vec{y}$ such that $\pi_J(\vec{q})\xrightarrow{u}_{\pi_J(G)}\pi_J(\vec{y})$ and such that the bounds $|u|\leq \sum_{|I|<n\leq |J|}\lambda_n^{d+1-n}$, and $\vec{y}(j)\geq \lambda_{|J|}$ for every $j\in J$ hold.
\end{lem}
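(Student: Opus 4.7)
My plan is to construct the run $u = u_1 u_2 \cdots u_{|J|-|I|}$ iteratively, in stages indexed $i = 1, \ldots, |J|-|I|$, where at stage $i$ we ``pump'' one new coordinate $j_i \in J \setminus I$ up to level $\lambda_{|I|+i}$. I will maintain the invariant that after stage $i$ the state $\vec{q}_i$ obtained satisfies $\vec{q}_i(j_k) \geq \lambda_{|I|+i}$ for every $k \leq i$, and that $|u_i| \leq \lambda_{|I|+i}^{d+1-(|I|+i)}$. Setting $\vec{q}_0 := \vec{q}$ and $\vec{y} := \vec{q}_{|J|-|I|}$, this delivers the length bound $|u| \leq \sum_{n=|I|+1}^{|J|} \lambda_n^{d+1-n}$ together with $\vec{y}(j) \geq \lambda_{|J|}$ for every $j \in J$ (trivial on $I$, and from the invariant at the final stage on $J \setminus I$).

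At stage $i$, set $K := \{j_1, \ldots, j_{i-1}\}$. Since $|I \cup K| = |I|+i-1 < |J|$, the set $I \cup K$ is strictly smaller than the minimal excluding set and hence is not itself an excluding set for $(\lambda, \vec{Q})$. Thus some $\vec{x} \in \vec{Q}$ and some $i^* \notin I \cup K$ satisfy $\vec{x}(i^*) \geq \lambda_{|I \cup K|+1} = \lambda_{|I|+i}$. Because every coordinate outside $J$ is bounded by $\lambda_{|J|+1} \leq \lambda_{|I|+i}$, we must have $i^* \in J \setminus (I \cup K)$, so we choose $j_i := i^*$. The projection $\pi_{I \cup K}(\vec{x})$ is then a ``goal'' state of $\pi_{I \cup K}(G)$ (a state whose $j_i$-coordinate is at least $\lambda_{|I|+i}$), and by strong connectivity of $\pi_{I \cup K}(G)$, inherited from $G$, there is a path in $\pi_{I \cup K}(G)$ from $\pi_{I \cup K}(\vec{q}_{i-1})$ to some goal state.

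A Rackoff-style pigeonhole argument bounds the length of the shortest such path. On a shortest path, every non-terminal configuration has all its $(J \setminus (I \cup K))$-coordinates strictly below $\lambda_{|I|+i}$, for otherwise the path could be truncated earlier. Coordinates not in $J$ are bounded by $\lambda_{|J|+1}$, since they agree with some state of $\vec{Q}$. Non-terminal states are therefore determined by at most $\lambda_{|I|+i}^{|J|-|I|-i+1} \cdot \lambda_{|J|+1}^{d-|J|} \leq \lambda_{|I|+i}^{d+1-(|I|+i)}$ choices, where the inequality uses $\lambda_{|J|+1} \leq \lambda_{|I|+i}$. A shortcut argument then shows the shortest path has at most that many transitions. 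To lift this short path to a run in $G$ starting at $\vec{q}_{i-1}$, I invoke Lemma~\ref{lem:holding} with the index set $K$: the hypothesis $\vec{q}_{i-1}(j_k) \geq |u_i|\,\norm{\infty}{\vec{A}}$ for $j_k \in K$ is exactly what the adapted extractor condition $\lambda_{|I|+i-1} \geq \lambda_{|I|+i}^{d+1-(|I|+i)}\norm{\infty}{\vec{A}} + \lambda_{|I|+i}$ guarantees, given the inductive hypothesis $\vec{q}_{i-1}(j_k) \geq \lambda_{|I|+i-1}$.

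Maintaining the invariant at the end of stage $i$ follows from the same condition: each coordinate decreases by at most $|u_i|\,\norm{\infty}{\vec{A}} \leq \lambda_{|I|+i-1} - \lambda_{|I|+i}$ during $u_i$, so the previously pumped $j_k$ for $k < i$ remain at $\geq \lambda_{|I|+i}$, and $j_i$ arrives at $\geq \lambda_{|I|+i}$ by construction. Concatenating all stages gives a genuine run in $G$, whose $\pi_J$-projection is a path in $\pi_J(G)$ because $I \cup K \subseteq J$ at every stage, so each projected transition automatically lies in $\pi_J(T)$. The main technical obstacle is the pigeonhole count matching the bound $\lambda_{|I|+i}^{d+1-(|I|+i)}$ exactly, and verifying that the adapted extractor inequality telescopes so that pumped coordinates never fall below $\lambda_{|J|}$ by the end of the construction.
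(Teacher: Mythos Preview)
Your iterative approach---pumping one coordinate of $J \setminus I$ per stage and lifting each stage via Lemma~\ref{lem:holding}---is sound and genuinely different from the paper's proof. The paper argues by induction on $|J|-|I|$: it finds a short path \emph{inside $G$ itself} to the first non-normalized state $\vec{p}$, projects by the minimal excluding set $K$ of $(\lambda,\{\vec{p}\})$ (possibly absorbing several coordinates at once), and recurses on $\pi_K(G)$. You instead work in the projection $\pi_{I\cup K}(G)$ at each stage and always add exactly one coordinate. Both organisations land on the same length bound $\sum_{|I|<n\le|J|}\lambda_n^{d+1-n}$; yours makes explicit where each summand is spent, while the paper's induction is shorter to state and lifts only once per level.

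There is, however, a real slip in your pigeonhole step. You fix $j_i := i^*$ \emph{before} taking the shortest path, and then assert that on a shortest path to a state whose $j_i$-coordinate is $\ge \lambda_{|I|+i}$, every non-terminal state has \emph{all} its $J\setminus(I\cup K)$-coordinates below $\lambda_{|I|+i}$. That does not follow: a non-terminal state could have some other coordinate $j'\in J\setminus(I\cup K)$ large without being a $j_i$-goal, so no truncation is available; and since states of $\pi_{I\cup K}(\vec Q)$ may have $J$-coordinates as large as $\norm{\infty}{\vec Q}$, which is not controlled by $\lambda$, your count $\lambda_{|I|+i}^{d+1-(|I|+i)}$ then fails. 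The fix is immediate: declare a state a goal as soon as \emph{some} coordinate in $J\setminus(I\cup K)$ reaches $\lambda_{|I|+i}$, take the shortest path to any such goal (one exists by your non-excluding argument), and only afterwards set $j_i$ to be a coordinate that is large at the endpoint. With this reordering your pigeonhole bound, the lifting via Lemma~\ref{lem:holding}, and the maintenance of the invariant all go through exactly as you describe.
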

\begin{proof}
  Since $\vec{Q}\subseteq \setN_I^d$ we deduce that $I\subseteq J$. We introduce a parameter $k\in\setN$ and we prove the lemma by induction over $k$ under the constraint $|J|-|I|\leq k$. Observe that if $k=0$ then $I=J$ and the property is proved with $u=\epsilon$ and $\vec{y}=\vec{q}$. Assume the property proved for a natural number $k\in\setN$ and let us consider a witness graph $G=(\vec{Q},T)$ with a set of projected components $I$ such that $|J|-|I|\leq k+1$ where $J$ is the minimal excluding set for $(\lambda,\vec{Q})$. We consider a state $\vec{q}\in\vec{Q}$. If $\vec{Q}$ is normalized for $\lambda$ then $J=I$ and the property is proved. So we can assume that $\vec{Q}$ is not normalized for $\lambda$. We deduce that there exists a state in $\vec{Q}$ that is not normalized. Since $G$ is strongly connected, there exists a path $\vec{q}\xrightarrow{\sigma}_G\vec{p}$ with a minimal length such that $\vec{p}$ is not normalized. Let us observe that the number of states in $\vec{Q}$ that are normalized is bounded by $\lambda_{|I|+1}^{d-|I|}$. By minimality of the length of $\sigma$ we deduce that $|\sigma|\leq \lambda_{|I|+1}^{d-|I|}$. 
  
  We introduce the minimal excluding set $K$ for $(\lambda,\{\vec{p}\})$. Observe that $I$ is strictly included in $K$ since $\vec{p}$ is not normalized. Moreover $K$ is included in $J$ since $J$ is an excluding set for $(\lambda,\{\vec{p}\})$. Lemma~\ref{lem:excluding} shows that $J$ is the minimal excluding set of $(\lambda,\pi_K(\vec{Q}))$. Observe that $|J|-|K|<|J|-|I|\leq k+1$. By applying the induction on the witness graph $\pi_K(G)$ and the state $\pi_K(\vec{p})$, we deduce that there exists a run 
  $\pi_K(\vec{p})\xrightarrow{u}\vec{y}$ such that $\pi_J(\vec{p})\xrightarrow{u}_{\pi_J(G)}\pi_J(\vec{y})$ with $|u|\leq \sum_{|K|<n\leq |J|}\lambda_n^{d+1-n}$ and such that $\vec{y}(j)\geq \lambda_{|J|}$ for every $j\in J$. We introduce the word $v=\sigma u$. Since $I$ is strictly included in $K$ we deduce that $\sum_{|I|<n\leq |K|}\lambda_n^{d+1-n}\geq \lambda_{|I|+1}^{d-|I|}$. Thus $|v|\leq \sum_{|I|<n\leq |J|}\lambda_n^{d+1-n}$.
  
  Since $\lambda$ is an adapted extractor we deduce that $\lambda_{|K|}\geq \norm{\infty}{\vec{A}}\sum_{|K|<n\leq |J|}\lambda_n^{d+1-n}+\lambda_{|J|}$. From $\vec{p}(k)\geq \lambda_{|K|}$ for every $k\in K$ we deduce that $\vec{p}(k)\geq \norm{\infty}{\vec{A}}|u|+\lambda_{|J|}$.  Since there exists a run from $\pi_K(\vec{p})$ labelled by $u$, Lemma~\ref{lem:holding} shows that there exists a run $\vec{p}\xrightarrow{u}\vec{z}$. For every $k\in K$ we have $\vec{z}(k)\geq \vec{p}(k)-\norm{\infty}{\vec{A}}|u|\geq \lambda_{|J|}$. As $\vec{p}\xrightarrow{u}\vec{z}$ we deduce that $\pi_K(\vec{p})\xrightarrow{u}\pi_K(\vec{z})$. In particular $\pi_K(\vec{z})=\vec{y}$. Let $j\in J\moins K$. From the previous equality we get $\vec{z}(j)=\vec{y}(j)$. Moreover since $\vec{y}(j)\geq \lambda_{|J|}$ we get $\vec{z}(j)\geq \lambda_{|J|}$. We have proved that $\vec{z}(j)\geq \lambda_{|J|}$ for every $j\in J$. Hence the induction is proved.
\end{proof}

Now let us prove Theorem~\ref{thm:pumpable}. We consider a witness graph $G$ with a set of states $\vec{Q}\subseteq\setN_I^d$. We also consider a positive integer $s\in\setN_{>0}$ and we introduce the positive integers $a=\norm{\infty}{\vec{A}}$ and $x=(1+a)s$. Let $\lambda$ be the adapted extractor defined by $\lambda_d=s$ and the following induction for every $n\in\{2,\ldots,d\}$:
$$\lambda_{n-1}=\lambda_{n}^d(1+\norm{\infty}{\vec{A}})$$
An immediate induction provides $\lambda_n^{d+1-n}\leq x^{d^d}$ for every $n\in\{1,\ldots,d\}$. We introduce the minimal excluding set $J$ for $(\lambda,\vec{Q})$. Observe that if $|J|=d$ then $|\pi_J(\vec{Q})|=1$ and in particular $|\pi_J(\vec{Q})|\leq x^{d^d}$. If $|J|<d$, the number of states in $\pi_J(\vec{Q})$ is bounded by $\lambda_{|J|+1}^{d-|J|}$. Hence $|\pi_J(\vec{Q})|\leq x^{d^d}$ in any case. Let us consider $\vec{q}\in\vec{Q}$ such that $\norm{\infty}{\vec{q}}<s$. Lemma~\ref{lem:forward} shows that there exists a run $\vec{q}\xrightarrow{\sigma}\vec{x}$ with $\vec{x}(j)\geq \lambda_{|J|}$ for every $j\in J$ such that $\pi_J(\vec{q})\xrightarrow{\sigma}_{\pi_J(G)}\pi_J(\vec{x})$ and such that:
$$|\sigma|\leq \sum_{n=1}^{|J|}\lambda_n^{d+1-n}$$
Since $\pi_J(G)$ is strongly connected there exists a path $\pi_J(\vec{x})\xrightarrow{u}_{\pi_J(G)}\pi_J(\vec{q})$. We can assume that the length of $u$ is minimal. In particular $u=\epsilon$ if $J=\{1,\ldots,d\}$ and $|u|\leq \lambda_{|J|+1}^{d-|J|}$ otherwise. In both case $|\sigma u|\leq d x^{d^d}$. Since $\lambda$ is an adapted extractor we deduce that $\vec{x}(j)\geq |u|~\norm{\infty}{\vec{A}}$ for every $j\in J$ and by applying Lemma~\ref{lem:holding} we deduce that there exists a run $\vec{x}\xrightarrow{u}\vec{y}$. Since $\pi_J(\vec{x})\xrightarrow{u}\pi_J(\vec{q})$ we deduce that $\vec{y}(j)=\vec{q}(j)$ for every $j\not\in J$. Moreover if $j\in J\moins I$ since $\vec{y}(j)\geq s$ and $s>\norm{\infty}{\vec{q}}$ we get $\vec{y}(j)>\vec{q}(j)$. We deduce that $\vec{q}\leq\vec{y}$ and $J\moins I=\{i \mid \vec{q}(i)\not=\vec{y}(i)\}$. Therefore $\vec{q}$ is forward pumpable by the cycle $\pi_J(\vec{q})\xrightarrow{\sigma u}_{\pi_J(G)}\pi_J(\vec{q})$.

\medskip

Symmetrically we prove the backward case. We have proved Theorem~\ref{thm:pumpable}.

\section{Deciding The Reversibility Problem}\label{sec:deciding}
In this section, the reversible reachability problem is proved to be EXPSPACE-complete. The proof is inspired by the Kosaraju ideas \cite{Kosaraju82}. A word $\alpha\in\vec{A}^*$ is said to be \emph{reversible} on a configuration $\vec{p}$ if there exists a word $\beta\in\vec{A}^*$ such that $\vec{p}\xrightarrow{\alpha \beta}\vec{p}$ and $\Delta(\alpha)+\Delta(\beta)=\vec{0}$. Note that if $\vec{p}$ is a standard configuration the last condition is implied by the first one. 
\begin{thm}\label{thm:revbound}
  Let $\alpha\in\vec{A}^*$ be a reversible word on a configuration $\vec{p}$. There exists another word $\alpha'\in\vec{A}^*$ reversible on $\vec{p}$ such that $\Delta(\alpha)=\Delta(\alpha')$ and such that:
  $$|\alpha'|\leq 17d^2 x^{15d^{d+2}}$$
  where $x=(1+2\norm{\infty}{\vec{A}})~(1+\norm{\infty}{\vec{p}} +\norm{\infty}{\Delta(\alpha)} )$.
\end{thm}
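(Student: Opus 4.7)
The plan is to construct a short word $\alpha'\in\vec{A}^*$ with $\vec{p}\xrightarrow{\alpha'}\vec{q}$ in the VAS, writing $\vec{q}=\vec{p}+\pi_I(\Delta(\alpha))$, and with $\Delta(\alpha')=\Delta(\alpha)$. Such an $\alpha'$ is automatically reversible on $\vec{p}$, because the original $\beta$ still satisfies $\vec{q}\xrightarrow{\beta}\vec{p}$ and $\Delta(\alpha')+\Delta(\beta)=\vec{0}$, so only $\alpha'$ has to be short. As a first step I would let $G$ be the subreachability graph whose states and transitions are those visited by the run $\vec{p}\xrightarrow{\alpha\beta}\vec{p}$; the cycle labelled $\alpha\beta$ at $\vec{p}$ is total in $G$ with zero displacement, so $G$ is a reversible witness graph by Lemma~\ref{lem:rev}.

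Next I would invoke the extractor machinery of Section~\ref{sec:pump}. Taking $s=1+\norm{\infty}{\vec{p}}+\norm{\infty}{\Delta(\alpha)}$, both $\vec{p}$ and $\vec{q}$ satisfy $\norm{\infty}{\cdot}<s$; Theorem~\ref{thm:pumpable} then supplies a set of indices $J$ such that $\pi_J(G)$ has at most $x^{d^d}$ states and both $\vec{p}$ and $\vec{q}$ are forward and backward pumpable in $\pi_J(G)$ by cycles of length at most $d\,x^{d^d}$. Projecting preserves reversibility, since the projection of the total zero-displacement cycle $\alpha\beta$ remains total and of zero displacement; Lemma~\ref{lem:rev} therefore makes $\vec{Z}_{\pi_J(G)}$ a subgroup of $(\setZ^d,+)$, and Theorem~\ref{thm:reversible} produces a total Kirchhoff function $\mu_0$ of $\pi_J(G)$ with zero displacement and $\norm{\infty}{\mu_0}$ bounded by a polynomial in $x$ of degree $d(d+1)$.

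I would then build $\alpha'$ as a three-phase word $\sigma^n\cdot\eta\cdot\tau^n$: a forward pump $\sigma^n$ raising $\vec{p}$ to some $\vec{y}_n$, a short bridge $\eta$ in the VAS from $\vec{y}_n$ to some $\vec{x}_n$, and a backward pump $\tau^n$ ending at $\vec{q}$. The bridge $\eta$ must have displacement $\Delta(\alpha)-n\,\Delta(\sigma)-n\,\Delta(\tau)$ and must project to a path $\pi_J(\vec{p})\to\pi_J(\vec{q})$ in $\pi_J(G)$. I would take a shortest such path $\alpha_0$ (of length at most $x^{d^d}$) and append a total cycle at $\pi_J(\vec{q})$ whose Parikh image is $\mu_0+\mu_1$, where $\mu_1$ is the Kirchhoff function supplied by Theorem~\ref{thm:bound} for the residual displacement $\Delta(\alpha)-n\,\Delta(\sigma)-n\,\Delta(\tau)-\Delta(\alpha_0)$; that residual lies in $\vec{Z}_{\pi_J(G)}$ because the difference of any two path-displacements between the same endpoints is a cycle-displacement, and the group $\vec{Z}_{\pi_J(G)}$ absorbs the cycle-displacements $n\Delta(\sigma)$ and $n\Delta(\tau)$. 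Lemma~\ref{lem:holding} then lifts the resulting path to a run from $\vec{y}_n$ in the VAS whenever the $J\setminus I$ coordinates of $\vec{y}_n$ dominate $|\eta|\cdot\norm{\infty}{\vec{A}}$.

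The hard part will be balancing $n$ against $|\eta|$: Theorem~\ref{thm:bound} bounds $|\eta|$ by a degree-$d$ polynomial in $\norm{\infty}{\Delta(\eta)}$, which is itself linear in $n$, so a naive choice leaves $|\eta|$ growing like $n^d$ and makes the constraint $n\geq\norm{\infty}{\vec{A}}\cdot|\eta|$ unsatisfiable for $d\geq 2$. The resolution is to choose the pumps so that $\Delta(\sigma)+\Delta(\tau)=\vec{0}$: starting from any forward pump $\sigma$ at $\pi_J(\vec{p})$ with $\Delta(\sigma)(j)\geq 1$ for $j\in J\setminus I$, the subgroup property combined with Theorem~\ref{thm:bound} and $\mu_0$ yields a short cycle at $\pi_J(\vec{q})$ of displacement $-\Delta(\sigma)$, which becomes backward pumpable from $\vec{q}$ after a preliminary $\sigma$-compatible adjustment of polynomial length. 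Once $\Delta(\sigma)+\Delta(\tau)=\vec{0}$, the residual displacement of $\eta$ collapses to $\Delta(\alpha)-\Delta(\alpha_0)$, independent of $n$; then $|\eta|$ is polynomial in $x$, $n$ can be chosen polynomial in $x$, and the final length bound $|\alpha'|\leq 17d^2 x^{15d^{d+2}}$ follows after assembling the three phases and tracking the degree-$d$ factors introduced by Theorems~\ref{thm:bound}, \ref{thm:reversible} and \ref{thm:pumpable}.
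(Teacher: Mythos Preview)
Your setup (building the reversible witness graph $G$, picking $s=1+\norm{\infty}{\vec{p}}+\norm{\infty}{\Delta(\alpha)}$, applying Theorem~\ref{thm:pumpable} to obtain $J$ and short pumps at $\vec{p}$ and $\vec{q}$, and using Theorem~\ref{thm:reversible} to make $\pi_J(G)$ reversible with a bounded total zero-displacement Kirchhoff function) matches the paper exactly, and you correctly isolate the one real obstacle: the bridge displacement depends on $n$ unless the pump displacements cancel.

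Where you diverge is in how to remove that $n$-dependence. The paper does \emph{not} force $\Delta(\sigma)+\Delta(\tau)=\vec{0}$. It keeps the forward pump $v$ at $\vec{p}$ and the backward pump $w$ at $\vec{q}$ exactly as supplied by Theorem~\ref{thm:pumpable}, and adds a \emph{fourth} phase: a correction cycle $u$ at $\pi_J(\vec{q})$ with $\Delta(v)+\Delta(u)+\Delta(w)=\vec{0}$, so that $\alpha'=v^n\tilde{\alpha}\,u^n\,w^n$. During the $u^n$ phase the intermediate configurations are $\vec{q}+(n-k)\Delta_I(v)-k\Delta_I(w)$, whose $J\setminus I$ coordinates are at least $(n-k)+k=n$ since $\Delta_I(v)(j)\geq 1$ and $-\Delta_I(w)(j)\geq 1$; hence $u$ is executable throughout once $n\geq |u|\,\norm{\infty}{\vec{A}}$. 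The final $w^n$ phase then works for free because $w$ is backward pumpable from $\vec{q}$ by construction.

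Your three-phase plan has a genuine gap precisely here. Suppose you build a short cycle $\tau$ at $\pi_J(\vec{q})$ with $\Delta(\tau)=-\Delta(\sigma)$ via the subgroup property. In the descent $\vec{q}+n\Delta_I(\sigma)\xrightarrow{\tau^n}\vec{q}$, the $(k{+}1)$-th application of $\tau$ starts at $\vec{q}+(n-k)\Delta_I(\sigma)$, whose $J\setminus I$ coordinates are only $\geq n-k$; for $k$ close to $n$ this drops below $|\tau|\,\norm{\infty}{\vec{A}}$ and Lemma~\ref{lem:holding} no longer applies. Nothing in your ``preliminary $\sigma$-compatible adjustment'' repairs this: any padding on the $\sigma$ side either reintroduces an $n$-dependent residual displacement in $\eta$ or leaves the terminal configurations of the $\tau$-phase unchanged. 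The paper's insertion of $u^n$ \emph{between} the pumped-up plateau and the genuine backward pump $w^n$ is exactly what keeps all $J\setminus I$ coordinates uniformly $\geq n$ while the correction is being applied; replacing $w$ by an arbitrary cycle of displacement $-\Delta(\sigma)$ loses the backward-pumpability guarantee that Theorem~\ref{thm:pumpable} provides for $w$ and that you need near the bottom of the descent.
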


Let us assume that $\alpha\in\vec{A}^*$ is a reversible word on a configuration $\vec{p}\in\setN_I^d$. There exists a run $\vec{p}\xrightarrow{\alpha\beta}\vec{p}$ satisfying $\Delta(\alpha)+\Delta(\beta)=\vec{0}$. From this run we extract a unique witness graph $G=(\vec{Q},T)$ such that $\vec{p}\xrightarrow{\alpha\beta}_G\vec{p}$ is a total cycle. In particular the Parikh image of this cycle is a total Kirchhoff function proving that $G$ is reversible by Lemma~\ref{lem:rev}.

\medskip

We introduce $a=\norm{\infty}{\vec{A}}$ and $s=1+\norm{\infty}{\vec{p}}+\norm{\infty}{\Delta(\alpha)}$. Let $\vec{q}=\vec{p}+\Delta_I(\alpha)$. We have $\norm{\infty}{\vec{q}}\leq \norm{\infty}{\vec{p}}+\norm{\infty}{\Delta(\alpha)}<s$. Let us introduce $x=(1+2a)s$. Theorem~\ref{thm:pumpable} shows that there exists a set of indexes $J$ such that $\pi_J(G)$ has at most $x^{d^d}$ states and such that $\vec{p}$ is forward pumpable by a cycle $\pi_J(\vec{p})\xrightarrow{v}_{\pi_J(G)}\pi_J(\vec{p})$ and $\vec{q}$ is backward pumpable by a cycle $\pi_J(\vec{q})\xrightarrow{w}_{\pi_J(G)}\pi_J(\vec{q})$ such that $|v|,|w|\leq d x^{d^d}$. In particular $\Delta_I(v)$ and $-\Delta_I(w)$ are two vectors in $\{\vec{c}\in\setN_I^d \mid \vec{c}(i)\not=0 \Leftrightarrow i\in J\}$. For every $n\in\setN$ we have:

$$
\vec{p}\xrightarrow{v^n}\vec{p}+n\Delta_I(v) 
~~~~~~
\vec{q}-n\Delta_I(w)\xrightarrow{w^n}\vec{q}
$$

Since the witness graph $G$ is reversible, Lemma~\ref{lem:rev} shows that $\pi_J(G)$ is reversible. From Theorem~\ref{thm:reversible} we deduce that the zero vector is the displacement of a total Kirchhoff function $\mu$ for $\pi_J(G)$ satisfying:
\begin{align*}
  \norm{\infty}{\mu}
  &\leq (x^{d^d}(1+2a))^{d(d+1)}\\
  &\leq (x^{d^d}x)^{2d^2}\\
  &\leq (x^{2d^d})^{2d^2}\\
  &\leq x^{4d^{d+2}}
\end{align*}
Note that $|\pi_J(T)|\leq |\pi_J(\vec{Q})|~|A|\leq x^{d^d} (1+2a)^d\leq x^{2d^d}$.

\begin{lem}
  There exists a cycle $\pi_J(\vec{q})\xrightarrow{u}_{\pi_J(G)}\pi_J(\vec{q})$ such that $\Delta(v)+\Delta(u)+\Delta(w)=\vec{0}$ and:
  $$|u|\leq 3d ~x^{7d^{d+2}}$$
\end{lem}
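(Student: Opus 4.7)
The plan is to build the required cycle $u$ in three steps: use reversibility to locate $-\Delta(v)-\Delta(w)$ inside the displacement group of $\pi_J(G)$, apply Theorem~\ref{thm:bound} to realise it as a Kirchhoff function with small entries, and then convert that Kirchhoff function into a genuine cycle anchored at $\pi_J(\vec{q})$ via Euler's Lemma.

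First I will observe that $\Delta(v)$ and $\Delta(w)$ lie in $\vec{Z}_{\pi_J(G)}$, which by reversibility of $\pi_J(G)$ and Lemma~\ref{lem:rev} is a subgroup of $(\setZ^d,+)$; hence $-\Delta(v)-\Delta(w)\in\vec{Z}_{\pi_J(G)}$. I then apply Theorem~\ref{thm:bound} to $\pi_J(G)$ with $q=|\pi_J(\vec{Q})|\leq x^{d^d}$, $\norm{\infty}{\vec{A}}\leq x$, and $m=\norm{\infty}{\Delta(v)+\Delta(w)}\leq \norm{\infty}{\vec{A}}(|v|+|w|)\leq 2adx^{d^d}$, to obtain a Kirchhoff function $\lambda$ of $\pi_J(G)$ of displacement $-\Delta(v)-\Delta(w)$ with $\norm{\infty}{\lambda}$ bounded by a power of $x$ whose exponent is of order $d^{d+2}$.

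Next I will make $\lambda$ total without disturbing its displacement, so that the resulting Kirchhoff function corresponds to a single cycle rather than to a disjoint union of cycles. Setting $\lambda'=\lambda+\mu$ where $\mu$ is the total zero-displacement Kirchhoff function supplied by Theorem~\ref{thm:reversible}, the function $\lambda'$ is Kirchhoff, has the same displacement $-\Delta(v)-\Delta(w)$ as $\lambda$, and is total since $\lambda\geq 0$ and $\mu(t)\geq 1$ for every $t\in\pi_J(T)$. Euler's Lemma then provides a total cycle $\vec{r}\xrightarrow{u_0}_{\pi_J(G)}\vec{r}$ with Parikh image $\lambda'$. Because $u_0$ uses every transition of $\pi_J(G)$, it visits the source of every transition and, by strong connectivity of $\pi_J(G)$, every state, in particular $\pi_J(\vec{q})$. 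Splitting $u_0$ at an occurrence of $\pi_J(\vec{q})$ and cyclically swapping the two halves yields the desired cycle $\pi_J(\vec{q})\xrightarrow{u}_{\pi_J(G)}\pi_J(\vec{q})$ with the same Parikh image $\lambda'$, hence with $\Delta(u)=-\Delta(v)-\Delta(w)$.

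For the length, $|u|=\sum_{t\in\pi_J(T)}\lambda'(t)\leq |\pi_J(T)|\,\norm{\infty}{\lambda'}$. Combining the bounds $|\pi_J(T)|\leq x^{2d^d}$, $\norm{\infty}{\mu}\leq x^{4d^{d+2}}$, and the estimate of $\norm{\infty}{\lambda}$ from Theorem~\ref{thm:bound}, routine exponent bookkeeping (dominating the various exponents $d^{d+2}$, $d^{d+1}$, $d^d$, and the additive $2d^d$ by $7d^{d+2}$, with the remaining constants absorbed by the factor $3d$) gives $|u|\leq 3dx^{7d^{d+2}}$. The only genuine obstacle in the argument is the middle step: a Kirchhoff function of prescribed displacement is a priori a disjoint union of cycles rather than a single cycle through a designated state, and the device of adding the total function $\mu$ to force totality and then invoking Euler's Lemma is precisely what allows the passage from the algebraic object $\lambda$ back to an actual cycle based at $\pi_J(\vec{q})$; the remainder is a bounded-exponent computation.
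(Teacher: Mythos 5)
Your argument is correct, but it is a genuinely different route from the paper's. The paper never invokes Theorem~\ref{thm:bound} (nor the subgroup characterization from Lemma~\ref{lem:rev}) inside this lemma: it simply sets $\lambda=(1+2d x^{d^d})\mu-(\mu_v+\mu_w)$, where $\mu_v,\mu_w$ are the Parikh images of the cycles labelled by $v$ and $w$ in $\pi_J(G)$. Since $\mu_v(t)+\mu_w(t)\leq |v|+|w|\leq 2d x^{d^d}$, this $\lambda$ is already a total Kirchhoff function, its displacement is $(1+2dx^{d^d})\vec{0}-(\Delta(v)+\Delta(w))$ for free because $\mu$ has zero displacement, and Euler's Lemma converts it into the required cycle; the length bound then comes out with the single external factor $3d$ because $\norm{\infty}{\mu}$ is only ever multiplied linearly by the coefficient $1+2dx^{d^d}$. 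Your route---place $-\Delta(v)-\Delta(w)$ in the group $\vec{Z}_{\pi_J(G)}$, realize it by a small Kirchhoff function via Theorem~\ref{thm:bound}, then add $\mu$ to force totality before applying Euler's Lemma and rotating the total cycle so that it is based at $\pi_J(\vec{q})$---is sound, and your explicit rotation argument spells out a point the paper leaves implicit; it is also essentially the strategy the paper uses later for Lemma~\ref{lem:tech}, so it is more uniform. What it costs is the constant bookkeeping: Theorem~\ref{thm:bound} raises the inner estimate, which contains your $m\leq 2ad\,x^{d^d}$, to the $d$-th power, so you pick up a factor of order $(2d)^d$ that cannot be absorbed by the external factor $3d$ as your last sentence suggests; it must instead be absorbed into the power of $x$, which does work because $x\geq 3$ whenever $\norm{\infty}{\vec{A}}\geq 1$ (the case $\norm{\infty}{\vec{A}}=0$ being trivial), but that observation deserves a line, and even then the exponent accounting needed to land exactly on $3d\,x^{7d^{d+2}}$ is tighter than in the paper's version.
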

\proof
  Let $\mu_v,\mu_w$ be the Parikh images of  $\pi_J(\vec{p})\xrightarrow{v}_{\pi_J(G)}\pi_J(\vec{p})$ and
  $\pi_J(\vec{q})\xrightarrow{w}_{\pi_J(G)}\pi_J(\vec{q})$. We introduce the function $\lambda=(1+2d x^{d^d})\mu-(\mu_v+\mu_w)$. Observe that $\lambda$ is a Kirchhoff function for $\pi_J(G)$ satisfying $\lambda(t)\geq (1+2d x^{d^d})-2d x^{d^d}\geq 1$ for every $t\in\pi_J(T)$. The Euler's Lemma shows that $\lambda$ is the Parikh image of a total cycle $\pi_J(\vec{q})\xrightarrow{u}_{\pi_J(G)}\pi_J(\vec{q})$. Observe that $\Delta(u) =(1+2d x^{d^d})\vec{0}-(\Delta(v)+ \Delta(w))$. Hence $\Delta(v)+\Delta(u)+\Delta(w)=\vec{0}$. The length of $u$ is bounded by:
\[
  |u|=\sum_{t\in\pi_J(T)}(1+2d x^{d^d})\mu(t)-(\mu_v(t)+\mu_w(t))
  \leq 3d x^{d^d}  \norm{\infty}{\mu}|\pi_J(T)|
\leq 3d x^{7d^{d+2}}\eqno{\qEd}
\]


\begin{lem}\label{lem:tech}
  There exists a path $\pi_J(\vec{p})\xrightarrow{\tilde{\alpha}}_{\pi_J(G)}\pi_J(\vec{q})$ such that $\Delta(\tilde\alpha)=\Delta(\alpha)$ and:
  $$|\tilde{\alpha}|\leq 2x^{7d^{d+2}}$$
\end{lem}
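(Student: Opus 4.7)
The plan is to realize $\tilde{\alpha}$ as the label of a path in $\pi_J(G)$ whose Parikh image has the form $\mu_\sigma+\mu_r+\mu$, where $\sigma$ is a shortest path from $\pi_J(\vec{p})$ to $\pi_J(\vec{q})$, $\mu_r$ is a small Kirchhoff function correcting a displacement discrepancy, and $\mu$ is the total zero-displacement Kirchhoff function for $\pi_J(G)$ already produced via Theorem~\ref{thm:reversible}.

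First I pick $\sigma$ to be a shortest path $\pi_J(\vec{p})\xrightarrow{\sigma}_{\pi_J(G)}\pi_J(\vec{q})$; since $\pi_J(G)$ is strongly connected this path exists and satisfies $|\sigma|\leq|\pi_J(\vec{Q})|\leq x^{d^d}$. Let $\mu_\sigma$ denote its Parikh image. The word $\alpha$ itself labels a path in $\pi_J(G)$ between the same two endpoints, obtained by projecting the path $\vec{p}\xrightarrow{\alpha}_G\vec{q}$ extracted from the original run $\vec{p}\xrightarrow{\alpha\beta}_G\vec{p}$. Concatenating either $\alpha$ or $\sigma$ with a common return path from $\pi_J(\vec{q})$ to $\pi_J(\vec{p})$ produces two cycles in $\pi_J(G)$, so both $\Delta(\alpha)$ and $\Delta(\sigma)$ differ from a common vector by elements of $\vec{Z}_{\pi_J(G)}$. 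Since $\pi_J(G)$ inherits reversibility from $G$, Lemma~\ref{lem:rev} tells us $\vec{Z}_{\pi_J(G)}$ is a subgroup, and therefore $\vec{r}:=\Delta(\alpha)-\Delta(\sigma)$ belongs to $\vec{Z}_{\pi_J(G)}$.

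I then invoke Theorem~\ref{thm:bound} on $\pi_J(G)$ with the vector $\vec{r}$. Because $\norm{\infty}{\vec{r}}\leq\norm{\infty}{\Delta(\alpha)}+|\sigma|~\norm{\infty}{\vec{A}}$ remains polynomial in the quantities already controlled, the theorem produces a Kirchhoff function $\mu_r$ for $\pi_J(G)$ with displacement $\vec{r}$ whose infinity norm is at most a modest power of $x$. Setting $\tilde{\mu}:=\mu_\sigma+\mu_r+\mu$, one checks directly that $\tilde{\mu}$ is \emph{total} (since $\tilde{\mu}(t)\geq\mu(t)\geq 1$ for every $t\in\pi_J(T)$), that its weighted action sum is $\Delta(\sigma)+\vec{r}+\vec{0}=\Delta(\alpha)$, and that it satisfies the path-balance condition inherited from $\mu_\sigma$: the quantity $\operatorname{in}(\tilde{\mu})(\vec{y})-\operatorname{out}(\tilde{\mu})(\vec{y})$ equals $1$ at $\vec{y}=\pi_J(\vec{q})$, equals $-1$ at $\vec{y}=\pi_J(\vec{p})$, and vanishes elsewhere, because $\mu_r$ and $\mu$ are Kirchhoff.

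The main obstacle is promoting $\tilde{\mu}$ to the Parikh image of a \emph{single} path rather than a path decorated by detached cycles. I would rely on the natural path-analogue of Euler's Lemma: on a strongly connected graph, any total $\setN$-valued function on the transitions satisfying the above path-balance condition is the Parikh image of a path between the two distinguished states, because strong connectedness together with totality lets every remaining cycle in the support be spliced into the walk. Granting this, the resulting label $\tilde{\alpha}$ has length $\sum_{t\in\pi_J(T)}\tilde{\mu}(t)\leq|\pi_J(T)|\cdot\norm{\infty}{\tilde{\mu}}$, and using $|\pi_J(T)|\leq x^{2d^d}$ together with the fact that $\norm{\infty}{\tilde{\mu}}$ is dominated, up to lower-order terms, by $\norm{\infty}{\mu}\leq x^{4d^{d+2}}$, a routine multiplication yields the claimed bound $|\tilde{\alpha}|\leq 2x^{7d^{d+2}}$.
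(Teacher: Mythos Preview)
Your proposal is correct and follows essentially the same strategy as the paper: take a short connecting path in $\pi_J(G)$, use Theorem~\ref{thm:bound} to realize the displacement discrepancy as a bounded Kirchhoff function, add the total zero-displacement Kirchhoff function $\mu$ from Theorem~\ref{thm:reversible} to ensure totality, and then invoke an Eulerian argument to extract a single path with the desired Parikh image. The differences are cosmetic: the paper takes a short \emph{return} path $\tilde{\beta}$ from $\pi_J(\vec{q})$ to $\pi_J(\vec{p})$, forms $\vec{z}=\Delta(\alpha)+\Delta(\tilde{\beta})$, and then \emph{subtracts} the Parikh image of $\tilde{\beta}$ (which forces the scaling $x^{d^d}\mu$ to keep the result nonnegative), whereas you take a short \emph{forward} path $\sigma$ and only add, so a single copy of $\mu$ suffices; and where you invoke a ``path-analogue of Euler's Lemma'' directly, the paper makes this explicit by adjoining a dummy transition $t_\bullet$ from $\pi_J(\vec{q})$ to $\pi_J(\vec{p})$, applying the cycle version of Euler's Lemma to the augmented graph, and reading off $\tilde{\alpha}$ from the unique traversal of $t_\bullet$. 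One small inaccuracy: $\norm{\infty}{\mu_r}$ is of the \emph{same} order $x^{4d^{d+2}}$ as $\norm{\infty}{\mu}$, not lower order, but this does not affect the final bound.
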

\proof
  Since $\pi_J(G)$ is strongly connected, there exists a path $\pi_J(\vec{q})\xrightarrow{\tilde{\beta}}_{\pi_J(G)}\pi_J(\vec{p})$. We can assume that $|\tilde{\beta}|$ is minimal. In particular $|\tilde{\beta}|< x^{d^d}$. Moreover, we know that $\pi_J(\vec{p})\xrightarrow{\alpha}_{\pi_J(G)}\pi_J(\vec{q})$. Observe that $\alpha \tilde{\beta}$ is the label of a cycle in $\pi_J(G)$. Hence $\vec{z}=\Delta(\alpha) +\Delta( \tilde{\beta})$ is the displacement of a Kirchhoff function for $G$. We have $\norm{\infty}{\vec{z}}\leq \norm{\infty}{\Delta(\alpha)}+\norm{\infty}{\Delta(\tilde{\beta})}\leq s + |\tilde{\beta}|a$ we get $\norm{\infty}{\vec{z}}\leq s+x^{d^d}a\leq x^{d^d}(1+a)$. Theorem~\ref{thm:bound} shows that $\vec{z}$ is the displacement of a Kirchhoff function $\theta$ for $G$ such that:
\begin{align*}
  \norm{\infty}{\theta}
  &\leq ((x^{d^d})^{d+1}a(1+2a)^d + x^{d^d}(1+a))^d\\
  &\leq (x^{ 2d^{d+1}}a x^d  + x^{d^d}(1+a))^d\\
  &\leq (x^{3d^{d+1}}(1+2a))^d\\
  &\leq (x^{3d^{d+1}}x)^d\\
  &\leq x^{4d^{d+2}}
\end{align*}
We introduce the Parikh image $f$ of the path $\pi_J(\vec{q})\xrightarrow{\tilde{\beta}}_{\pi_J(G)}\pi_J(\vec{p})$. 
Let us add to the strongly connected graph $\pi_J(G)$ an additional transition $t_\bullet$ from $\pi_J(\vec{q})$ to $\pi_J(\vec{p})$ and let $G_\bullet$ be this new graph and $T_\bullet=\pi_J(T)\cup\{t_\bullet\}$ be its set of transitions. Functions $\theta$, $\mu$ and $f$ are extended over $T_\bullet$ by $\theta(t_\bullet)=\mu(t_\bullet)=f(t_\bullet)=0$. We also introduce the Parikh image $f_\bullet$ of $t_\bullet$, i.e. $f_\bullet(t_\bullet)=1$ and $f_\bullet(t)=0$ for every $t\in \pi_J(T)$. Let us observe that $g=\theta+x^{d^d}\mu-f+f_\bullet$ satisfies $g(t)\geq 1$ for every $t\in\pi_J(T)$ since $f(t)<x^{d^d}$. A $g(t_\bullet)=1$ we deduce that $g$ is a Kirchhoff function for $G_\bullet$ satisfying $g(t)\geq 1$ for every $t\in T_\bullet$. The Euler's Lemma shows that $g$ is the Parikh image of a total cycle. Since $g(t_\bullet)=1$ we deduce that $g$ is the Parikh image of a cycle of the form $(\pi_J(\vec{p})\xrightarrow{\tilde{\alpha}}_{\pi_J(G)}\pi_J(\vec{q}))~t_\bullet $. By definition of $g$ we get $\Delta( \tilde{\alpha}) =\vec{z}+x^{d^d}\vec{0}-\Delta(\tilde{\beta})+\vec{0}$. Hence $\Delta(\tilde{\alpha})=\vec{z}-\Delta(\tilde{\beta})$. Since $\vec{z}=\Delta(\alpha)+\Delta(\tilde{\beta})$ we get $\Delta(\tilde{\alpha})=\Delta(\alpha)$. The following inequalities provide the lemma:
\begin{align*}
  |\tilde{\alpha}|&
  \leq |\pi_J(T)|~\norm{\infty}{\theta} +x^{d^d}|\pi_J(T)|~\norm{\infty}{\mu}\\
  &\leq x^{2d^d}(x^{4d^{d+2}}+x^{d^d}x^{4d^{d+2}})\\
  &\leq 2x^{7d^{d+2}}\rlap{\hbox to 248 pt{\hfill\qEd}}
\end{align*}

\begin{lem}
  For every $n\geq |u| a$ we have:
  $$\vec{q}+n\Delta_I(v)\xrightarrow{u^{n}}\vec{q}-n\Delta_I(w)
  $$
\end{lem}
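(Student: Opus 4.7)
The plan is to iterate $u$ one copy at a time, invoking Lemma~\ref{lem:holding} at each step to lift the cycle $\pi_J(\vec{q})\xrightarrow{u}_{\pi_J(G)}\pi_J(\vec{q})$ from $\pi_J(G)$ back up to $\setN_I^d$. Define the intermediate configurations $\vec{c}_k=\vec{q}+n\Delta_I(v)+k\Delta_I(u)$ for $k\in\{0,\ldots,n\}$. Since $\Delta(v)+\Delta(u)+\Delta(w)=\vec{0}$ yields $\Delta_I(u)=-\Delta_I(v)-\Delta_I(w)$, one can rewrite $\vec{c}_k=\vec{q}+(n-k)\Delta_I(v)+k(-\Delta_I(w))$, so $\vec{c}_0=\vec{q}+n\Delta_I(v)$ and $\vec{c}_n=\vec{q}-n\Delta_I(w)$ are exactly the source and target of the run being sought.

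First I would check that each $\vec{c}_k$ is a genuine configuration in $\setN_I^d$. The vectors $\Delta_I(v)$ and $-\Delta_I(w)$ belong to $\setN_I^d$ with nonzero entries exactly on $J$, so their coordinates on $J\moins I$ are positive integers at least $1$. Hence for $i\notin J$ we have $\vec{c}_k(i)=\vec{q}(i)\geq 0$ trivially, while for $i\in J\moins I$ the convex-combination form gives the uniform lower bound $\vec{c}_k(i)\geq (n-k)+k=n$.

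Then I would apply Lemma~\ref{lem:holding} with $L=J\moins I$, $\sigma=u$, and $\vec{c}=\vec{c}_k$. A direct computation shows $\pi_L(\vec{c}_k)=\pi_J(\vec{q})$ since $\Delta_I(v)$ and $\Delta_I(u)$ are both supported on $J$; hence the given cycle of $\pi_J(G)$ provides the required run from $\pi_L(\vec{c}_k)$ labelled by $u$, and the quantitative premise $\vec{c}_k(i)\geq |u|~\norm{\infty}{\vec{A}}$ on $L$ follows from $\vec{c}_k(i)\geq n$ together with the hypothesis $n\geq |u|~a$. Lemma~\ref{lem:holding} therefore yields a run $\vec{c}_k\xrightarrow{u}\vec{c}_{k+1}$, and concatenating the $n$ short runs produces $\vec{c}_0\xrightarrow{u^n}\vec{c}_n$, which is the desired conclusion. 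The only delicate point, and the reason the argument must proceed iteratively rather than in one shot, is that a direct application of Lemma~\ref{lem:holding} to the whole word $u^n$ would demand $\vec{c}_0(i)\geq n|u|~a$ on $L$, i.e.\ essentially $n\geq n|u|~a$, which fails as soon as $|u|~a>1$; the per-iteration argument succeeds precisely because every intermediate $\vec{c}_k$ stays uniformly above $n$ on $J\moins I$.
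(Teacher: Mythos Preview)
Your proof is correct and follows essentially the same approach as the paper: the paper's sequence $\vec{x}_k=\vec{q}+(n-k)\Delta_I(v)-k\Delta_I(w)$ is exactly your $\vec{c}_k$, and the argument proceeds identically by observing $\pi_J(\vec{c}_k)=\pi_J(\vec{q})$, bounding $\vec{c}_k(j)\geq n\geq |u|\,a$ on $J$, and invoking Lemma~\ref{lem:holding} once per iterate. Your final paragraph explaining why a one-shot application of Lemma~\ref{lem:holding} to $u^n$ would fail is a nice clarification that the paper leaves implicit.
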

\begin{proof}
  Let $n\geq |u|a$. We introduce the sequence $(\vec{x}_k)_{0\leq k\leq n}$ of configurations $\vec{x}_k=\vec{q}+(n-k)\Delta_I(v)-k\Delta_I(w)$. Since $\pi_J(\vec{x}_k)=\pi_J(\vec{q})$ we deduce that there exists a run from  $\pi_J(\vec{x}_k)$ labelled by $u$. Moreover as $\Delta_I(v)(j)\geq 1$ and $-\Delta_I(w)(j)\geq 1$ for every $j\in J$, we deduce that $\vec{x}_k(j)\geq n\geq |u|a$ for every $j\in J$. Lemma~\ref{lem:holding} shows that there exists a run from $\vec{x}_k$ labelled by $u$. Since $\Delta(v)+\Delta(u)+\Delta(w)=\vec{0}$ we get $\vec{x}_k\xrightarrow{u}\vec{x}_{k+1}$.
\end{proof}

\begin{lem}
  For every $n\geq |\tilde{\alpha}| a$ we have:
  $$
  \vec{p}+n\Delta_I(v)\xrightarrow{\tilde{\alpha}}\vec{q}+n\Delta_I(v)
  $$
\end{lem}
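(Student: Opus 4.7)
The plan is a direct application of Lemma~\ref{lem:holding} to the projected path produced by Lemma~\ref{lem:tech}, using the fact that adding a multiple of $\Delta_I(v)$ to $\vec{p}$ is invisible under $\pi_J$ but pumps every $J$-component above any prescribed threshold.

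First I would note that, by the characterisation recalled just above Lemma~\ref{lem:tech}, $\Delta_I(v)$ lives in $\{\vec{c}\in\setN_I^d \mid \vec{c}(i)\not=0 \Leftrightarrow i\in J\}$. Two consequences follow. On the one hand the support of $n\Delta_I(v)$ is contained in $J$, so $\pi_J(\vec{p}+n\Delta_I(v))=\pi_J(\vec{p})$. On the other hand, for every $j\in J\moins I$ one has $\Delta_I(v)(j)\geq 1$, hence
$(\vec{p}+n\Delta_I(v))(j)\geq n$; and for $j\in I$ the coordinate is already $\star$.

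Next I would invoke Lemma~\ref{lem:tech}, which provides the path $\pi_J(\vec{p})\xrightarrow{\tilde{\alpha}}_{\pi_J(G)}\pi_J(\vec{q})$ and in particular a run from $\pi_J(\vec{p})=\pi_J(\vec{p}+n\Delta_I(v))$ labelled by $\tilde{\alpha}$. Applying Lemma~\ref{lem:holding} with the configuration $\vec{c}=\vec{p}+n\Delta_I(v)$ and the set of indexes $J$: for every $j\in J$ we have $\vec{c}(j)\geq n\geq |\tilde{\alpha}|\,a$ (the case $j\in I$ being trivial since then $\vec{c}(j)=\star$), so there exists a run $\vec{p}+n\Delta_I(v)\xrightarrow{\tilde{\alpha}}\vec{y}$ for some $\vec{y}\in\setN_I^d$.

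Finally, $\vec{y}$ is determined by the displacement identity: $\vec{y}=\vec{p}+n\Delta_I(v)+\Delta_I(\tilde{\alpha})$. Since $\Delta(\tilde{\alpha})=\Delta(\alpha)$ by Lemma~\ref{lem:tech}, we have $\Delta_I(\tilde{\alpha})=\pi_I(\Delta(\alpha))=\Delta_I(\alpha)$, and by the definition $\vec{q}=\vec{p}+\Delta_I(\alpha)$ this yields $\vec{y}=\vec{q}+n\Delta_I(v)$, proving the lemma.

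There is no real obstacle here: the whole content lies in observing that $n\Delta_I(v)$ is supported in $J$ (so it disappears under $\pi_J$) while simultaneously it boosts every $J$-coordinate enough to satisfy the quantitative hypothesis of Lemma~\ref{lem:holding}. The only point requiring care is to check the hypothesis of Lemma~\ref{lem:holding} for indexes in $I$, which is vacuous because those coordinates are $\star$.
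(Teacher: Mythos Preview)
Your proof is correct and follows essentially the same approach as the paper: both establish $\pi_J(\vec{p}+n\Delta_I(v))=\pi_J(\vec{p})$, check that every $J$-coordinate of $\vec{p}+n\Delta_I(v)$ is at least $n\geq |\tilde{\alpha}|a$, apply Lemma~\ref{lem:holding} to lift the projected run, and then identify the target via $\Delta(\tilde{\alpha})=\Delta(\alpha)$ and $\vec{q}=\vec{p}+\Delta_I(\alpha)$. Your write-up is in fact slightly more explicit than the paper's about why the support of $\Delta_I(v)$ lies in $J$ and why the $I$-coordinates pose no issue.
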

\begin{proof}
  Observe that $\pi_J(\vec{p}+n\Delta_I(v))=\pi_J(\vec{p})$ and $\pi_J(\vec{p})\xrightarrow{\tilde{\alpha}}\pi_J(\vec{q})$. Moreover for every $j\in J$ we have $(\vec{p}+n\Delta_I(v))(j)\geq n\geq |\tilde{\alpha}|a$. From Lemma~\ref{lem:holding} we deduce that there exists a run from $\vec{p}+n\Delta_I(v)$ labelled by $\tilde{\alpha}$. From $\vec{p}\xrightarrow{\alpha}\vec{q}$ we deduce that $\vec{p}+\Delta_I(\alpha)=\vec{q}$. Since $\Delta(\alpha)=\Delta(\tilde{\alpha})$ we deduce that $\vec{p}+\Delta_I(\tilde{\alpha})=\vec{q}$. We deduce the run $\vec{p}+n\Delta_I(v)\xrightarrow{\tilde{\alpha}}\vec{q}+n\Delta_I(v)$.
\end{proof}

Finally, let $n=a\max\{|\tilde{\alpha}|, |u|\}$. We have proved that $\vec{p}\xrightarrow{\alpha'}\vec{q}$ where $\alpha'=v^n \tilde{\alpha} u^n w^n$. Note that $\Delta(\alpha')=\Delta(\alpha)$ since $\Delta(\tilde{\alpha})=\Delta(\alpha)$ and $\Delta(v)+\Delta(u)+\Delta(w)=\vec{0}$. We deduce that $\Delta(\alpha')=\Delta(\alpha)$. As $\vec{q}\xrightarrow{\beta}\vec{p}$ with $\Delta(\alpha)+\Delta(\beta)=\vec{0}$ we deduce that $\alpha'$ is reversible on $\vec{p}$. Note that $n\leq a 3d x^{7d^{d+2}}\leq 3d x^{8d^{d+2}}$. Hence we have:
\begin{align*}
  |\alpha'|
  &\leq   2x^{7d^{d+2}} + 3d x^{8d^{d+2}}(2d x^{d^d}+3d x^{7d^{d+2}}) \\
 &\leq 17d^2 x^{15d^{d+2}}
\end{align*}

We have proved Theorem~\ref{thm:revbound}.
\begin{cor}
  Two standard configurations $\vec{p},\vec{q}$ are in the same strongly connected component of a standard subreachability graph if and only if there exist runs $\vec{p}\xrightarrow{\alpha}\vec{q}$ and  $\vec{q}\xrightarrow{\beta}\vec{p}$ such that:
  $$|\alpha|,|\beta|\leq 17d^2 x^{15d^{d+2}}$$
  where $x=(1+2\norm{\infty}{\vec{A}})(1+2\max\{\norm{\infty}{\vec{p}},\norm{\infty}{\vec{q}}\})$.
\end{cor}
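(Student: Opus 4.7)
The plan is to derive this corollary directly from Theorem~\ref{thm:revbound} applied in both directions. The easy direction is: if such short runs $\alpha,\beta$ exist, then the subreachability graph formed by the union of the configurations and transitions traversed by these two runs is standard, strongly connected (one obtains a cycle $\vec{p}\xrightarrow{\alpha\beta}\vec{p}$), and contains both $\vec{p}$ and $\vec{q}$ as states.

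For the nontrivial direction, start from the hypothesis that $\vec{p}$ and $\vec{q}$ lie in the same SCC of some standard subreachability graph; extract from this graph runs $\vec{p}\xrightarrow{\alpha_0}\vec{q}$ and $\vec{q}\xrightarrow{\beta_0}\vec{p}$. In the standard case, the equality $\Delta(\alpha_0)+\Delta(\beta_0)=\vec{0}$ follows automatically from $\vec{p}+\Delta(\alpha_0)+\Delta(\beta_0)=\vec{p}$, so $\alpha_0$ is reversible on $\vec{p}$ (with witness $\beta_0$) and symmetrically $\beta_0$ is reversible on $\vec{q}$.

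Then apply Theorem~\ref{thm:revbound} to $\alpha_0$ on $\vec{p}$ to obtain a word $\alpha$, reversible on $\vec{p}$, with $\Delta(\alpha)=\Delta(\alpha_0)=\vec{q}-\vec{p}$ and length bounded by $17d^2 x_1^{15d^{d+2}}$ where $x_1=(1+2\norm{\infty}{\vec{A}})(1+\norm{\infty}{\vec{p}}+\norm{\infty}{\Delta(\alpha_0)})$. The reversibility of $\alpha$ on $\vec{p}$ provides a run starting at $\vec{p}$ labelled by $\alpha$, whose endpoint equals $\vec{p}+\Delta(\alpha)=\vec{q}$, yielding the required short run $\vec{p}\xrightarrow{\alpha}\vec{q}$. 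Applying the theorem symmetrically to $\beta_0$ on $\vec{q}$ produces the short run $\vec{q}\xrightarrow{\beta}\vec{p}$.

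The only remaining check is that both $x_1$ and the corresponding $x_2$ are bounded by the $x$ appearing in the statement. Since $\vec{p}$ and $\vec{q}$ are standard (hence componentwise non-negative), one has $\norm{\infty}{\Delta(\alpha_0)}=\norm{\infty}{\vec{q}-\vec{p}}\leq \max\{\norm{\infty}{\vec{p}},\norm{\infty}{\vec{q}}\}$, and analogously for $\beta_0$. Thus both $x_1$ and $x_2$ are at most $(1+2\norm{\infty}{\vec{A}})(1+2\max\{\norm{\infty}{\vec{p}},\norm{\infty}{\vec{q}}\})=x$, giving the common bound $17d^2 x^{15d^{d+2}}$ on both $|\alpha|$ and $|\beta|$. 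There is no real obstacle here since the work is entirely absorbed by Theorem~\ref{thm:revbound}; the only point that requires mild care is the interpretation of ``reversible on $\vec{p}$'' to ensure that the resulting short word really furnishes a run ending at $\vec{q}$ rather than some other configuration.
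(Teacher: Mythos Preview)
Your argument is correct and is exactly the intended derivation: the paper states the corollary without proof because it follows immediately from Theorem~\ref{thm:revbound} applied once in each direction, together with the observation that for standard configurations $\norm{\infty}{\vec{q}-\vec{p}}\leq\max\{\norm{\infty}{\vec{p}},\norm{\infty}{\vec{q}}\}$, which gives $1+\norm{\infty}{\vec{p}}+\norm{\infty}{\Delta(\alpha_0)}\leq 1+2\max\{\norm{\infty}{\vec{p}},\norm{\infty}{\vec{q}}\}$. Your check that reversibility of $\alpha'$ on $\vec{p}$ yields an actual run $\vec{p}\xrightarrow{\alpha'}\vec{q}$ (via $\vec{p}+\Delta(\alpha')=\vec{p}+\Delta(\alpha_0)=\vec{q}$) is the only point requiring comment, and you handled it correctly.
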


\begin{thm}
  The reversible reachability problem is EXPSPACE-complete.
\end{thm}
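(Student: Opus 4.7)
The plan is to combine the hardness direction already established by the earlier lemma with an EXPSPACE membership argument extracted directly from the preceding corollary. Since hardness is already done, the only remaining task is to exhibit an EXPSPACE decision procedure, and the corollary's double-exponential bound on witness runs does essentially all the work.

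Concretely, the corollary guarantees that if two standard configurations $\vec{p}$ and $\vec{q}$ are in the reversible reachability relation, then there exist runs $\vec{p}\xrightarrow{\alpha}\vec{q}$ and $\vec{q}\xrightarrow{\beta}\vec{p}$ with $|\alpha|,|\beta|\leq 17d^2 x^{15d^{d+2}}$, where $x=(1+2\norm{\infty}{\vec{A}})(1+2\max\{\norm{\infty}{\vec{p}},\norm{\infty}{\vec{q}}\})$. Writing $N$ for the binary size of the input $(\vec{p},\vec{A},\vec{q})$, I would observe that $d\leq N$ and $\log x=O(N)$, so $d^{d+2}$ is already exponential in $N$ and the overall bound $17d^2 x^{15d^{d+2}}$ is doubly exponential in $N$. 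Crucially, such a doubly exponential number fits into an exponential-size binary counter.

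The algorithm I have in mind is then a standard nondeterministic guess-and-verify: starting from $\vec{p}$, maintain the current configuration $\vec{c}$ (initially $\vec{p}$) and a length counter $\ell$ (initially $0$); at each step, nondeterministically choose $\vec{a}\in\vec{A}$, check that $\vec{c}+\vec{a}\geq\vec{0}$, update $\vec{c}\leftarrow\vec{c}+\vec{a}$ and $\ell\leftarrow\ell+1$, and reject as soon as $\ell$ exceeds the corollary's bound; accept the first phase when $\vec{c}=\vec{q}$, and then repeat the same procedure symmetrically from $\vec{q}$ to $\vec{p}$. Along any candidate run, every component of $\vec{c}$ stays bounded by $\norm{\infty}{\vec{p}}+\ell\,\norm{\infty}{\vec{A}}$, which is again doubly exponential in $N$, so both $\vec{c}$ and $\ell$ are representable in binary using only exponentially many bits. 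Thus the procedure runs in NEXPSPACE, and Savitch's theorem NEXPSPACE $=$ EXPSPACE delivers the EXPSPACE upper bound.

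The main obstacle is not in this theorem at all; the genuinely difficult content is already packaged into Theorems~\ref{thm:bound}, \ref{thm:reversible}, \ref{thm:pumpable}, and especially \ref{thm:revbound}, from which the preceding corollary is derived. Once that corollary is in hand, the present theorem is a direct application of the routine translation between length bounds on witnessing runs and space complexity.
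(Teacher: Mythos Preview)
Your proposal is correct and matches the paper's intended argument exactly: the paper itself leaves this theorem essentially unproved after the Corollary, having already sketched in the Outline section that ``using the fact that NEXPSPACE=EXPSPACE, and that double exponential numbers can be stored in exponential space, one obtain[s] the EXPSPACE upper bound.'' Your guess-and-verify procedure, with the observation that both the step counter and the intermediate configurations stay doubly exponential and hence fit in exponential space, is precisely the routine completion of that sketch.
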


\section{Application : Reversibility Domains}\label{sec:application}
During the execution of a VAS some actions are reversible and some not. More precisely, let $\vec{D}_{\vec{a}}$ be the set of standard configurations $\vec{c}$ such that there exists a word $\alpha$ satisfying $\vec{c}\xrightarrow{\vec{a}}\vec{c}+\vec{a}\xrightarrow{\alpha}\vec{c}$. We observe that the set $\vec{D}_{\vec{a}}$ is an upward closed set for the order $\leq$. In fact $\vec{c}\xrightarrow{\vec{a}}\vec{c}+\vec{a}\xrightarrow{\alpha}\vec{c}$ implies the same thing by replacing $\vec{c}$ with a standard configuration $\vec{x}\in\vec{c}+\setN^d$. So $\vec{D}_{\vec{a}}$ is characterized by its finite set of minimal elements $\min(\vec{D}_{\vec{a}})$ for $\leq$. As an application of Theorem~\ref{thm:revbound}, we obtain the following result.
\begin{thm}\label{thm:domain}
  Configurations $\vec{c}\in \min(\vec{D}_{\vec{a}})$ satisfy the following inequality where $a=\norm{\infty}{\vec{A}}$.
  $$\norm{\infty}{\vec{c}}\leq (102d^2a^2)^{(15d^{d+2})^{d+2}}$$
\end{thm}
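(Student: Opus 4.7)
The plan is to derive the bound via a contradiction argument combining Theorem~\ref{thm:revbound} with the pumping Lemma~\ref{lem:holding}, iterated $d+2$ times across projections of $\vec{c}$.

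Let $\vec{c}\in\min(\vec{D}_{\vec{a}})$. By definition of $\vec{D}_{\vec{a}}$, the word $\vec{a}$ is reversible on $\vec{c}$, so the Corollary following Theorem~\ref{thm:revbound}, applied to the reversible reachability pair $(\vec{c},\vec{c}+\vec{a})$, produces a short return word $\gamma$ with $\vec{c}\xrightarrow{\vec{a}}\vec{c}+\vec{a}\xrightarrow{\gamma}\vec{c}$ of length bounded by $17d^2 x^{15d^{d+2}}$ for some $x$ linear in $\norm{\infty}{\vec{c}}$ and $a$. A direct pumping step via Lemma~\ref{lem:holding} then says: if some coordinate $\vec{c}(i)$ exceeds $(1+|\gamma|)a+1$, the same labelling sequence $\vec{a}\gamma$ still executes a valid run from $\vec{c}-\vec{e}_i$ back to itself, so $\vec{c}-\vec{e}_i\in\vec{D}_{\vec{a}}$, contradicting minimality. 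Hence every coordinate of $\vec{c}$ is bounded by $(1+|\gamma|)a+1$.

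This on its own yields only the self-referential inequality
$$\norm{\infty}{\vec{c}}\leq 1+a+17ad^2\bigl((1+2a)(1+\norm{\infty}{\vec{c}}+a)\bigr)^{15d^{d+2}},$$
whose right-hand side grows super-linearly in $\norm{\infty}{\vec{c}}$ and therefore does not yet bound $\vec{c}$ in closed form. To close it, I would iterate the argument across the coordinates ordered by magnitude: at each iteration, the coordinates already controlled by the previous level's bound are ``frozen'' and effectively projected away using the extractor machinery of Section~\ref{sec:relax}, and Theorem~\ref{thm:revbound} together with Lemma~\ref{lem:holding} are reapplied in the resulting projected witness graph to the remaining, smaller-norm, projected configuration. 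Each iteration raises the current bound to the power $15d^{d+2}$ and absorbs constants of order $102d^2a^2$, giving a recursion of the shape $B_{k+1}\leq (102d^2a^2)\cdot B_k^{15d^{d+2}}$ with an absolute base $B_0$ depending only on $a$ and $d$. After $d+2$ iterations the composed exponent is $(15d^{d+2})^{d+2}$, yielding the claimed bound.

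The main obstacle lies in organising this iteration rigorously: one must verify that at each level the projected witness graph remains reversible so that Theorem~\ref{thm:revbound} still applies, that the slack required by Lemma~\ref{lem:holding} to lift projected runs survives through $d+2$ nested levels, and that the absolute constants in Theorems~\ref{thm:revbound}, \ref{thm:reversible} and~\ref{thm:pumpable} compose to exactly the stated constant $102d^2a^2$ rather than a strictly larger polynomial in $a$ and $d$.
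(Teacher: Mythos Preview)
Your ingredients are the right ones---Theorem~\ref{thm:revbound}, Lemma~\ref{lem:holding}, and the extractors of Section~\ref{sec:relax}---and you correctly identify that applying Theorem~\ref{thm:revbound} directly to $\vec{c}$ only yields a self-referential inequality. But the resolution you sketch is both backwards in one place and more laborious than needed.

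First, the direction of projection: to make Theorem~\ref{thm:revbound} useful you must project away the \emph{large} coordinates, not the ``already controlled'' ones, so that the surviving projected configuration has small norm. Your sentence ``the coordinates already controlled by the previous level's bound are frozen and projected away \ldots\ to the remaining, smaller-norm, projected configuration'' reads the wrong way round; if you project the small coordinates the residual norm goes up, not down.

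Second, no genuine iteration of Theorem~\ref{thm:revbound} is needed. The paper defines a single extractor $\lambda$ by $\lambda_{d+1}=a$ and $\lambda_{n-1}=17d^2(6a\lambda_n)^{15d^{d+2}}$, takes the minimal excluding set $I$ for $(\lambda,\{\vec{c}+\vec{a}\})$, and applies Theorem~\ref{thm:revbound} \emph{once} to the projected configuration $\vec{p}=\pi_I(\vec{c}+\vec{a})$, whose norm is below $\lambda_{|I|+1}$ by definition of excluding set. The resulting short word $\alpha'$ has length at most $\lambda_{|I|}/a$, and since every coordinate in $I$ is at least $\lambda_{|I|}$, a single invocation of Lemma~\ref{lem:holding} lifts the run to the standard configuration $\vec{y}$ obtained from $\vec{c}+\vec{a}$ by truncating the $I$-coordinates down to $\lambda_{|I|}$. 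Then $\vec{y}-\vec{a}\in\vec{D}_{\vec{a}}$ lies below $\vec{c}$, so minimality forces $\vec{c}=\vec{y}-\vec{a}$ and hence $\norm{\infty}{\vec{c}}\leq\lambda_0+a$. The tower of exponents $(15d^{d+2})^{d+2}$ arises entirely from unrolling the recursion defining $\lambda$, not from repeated calls to Theorem~\ref{thm:revbound}.

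Finally, your worry about ``the projected witness graph remaining reversible'' is a red herring: Theorem~\ref{thm:revbound} takes as input a configuration and a reversible word, not a witness graph, and reversibility of $\alpha$ on $\pi_I(\vec{c}+\vec{a})$ follows immediately because runs project and the displacement condition $\Delta(\alpha)+\Delta(\vec{a})=\vec{0}$ lives in $\setZ^d$ and is unaffected by projection.
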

\begin{proof}
  Observe that if $a=0$ we are done since in this case $\vec{c}=\vec{0}$. So we can assume that $a\geq 1$. We introduce the extractor $\lambda=(\lambda_1,\ldots,\lambda_d)$ defined by $\lambda_{d+1}=a$ and the following induction for $n\in\{1,\ldots,d+1\}$:
  $$\lambda_{n-1}=17d^2(6a\lambda_n)^{15d^{d+2}}$$
  Let $\vec{c}\in \min (\vec{D}_{\vec{a}})$ and let $\vec{d}=\vec{c}+\vec{a}$. Let us consider the minimal excluding set $I$ for $(\lambda,\{\vec{d}\})$. By minimality of $I$ we have $\vec{d}(i)<\lambda_{|I|+1}$ for every $i\not\in I$ and $\vec{d}(i)\geq \lambda_{|I|}$ for every $i\in I$. We consider the standard configuration $\vec{y}$ defined by $\vec{y}(i)=\lambda_{|I|}$ if $i\in I$ and $\vec{y}(i)=\vec{d}(i)$ if $i\not\in I$. Let us consider $\vec{q}=\pi_I(\vec{c})$ and $\vec{p}=\pi_I(\vec{d})$. Since $\vec{c}\in\vec{D}_{\vec{a}}$ there exists a run $\vec{d}\xrightarrow{\alpha}\vec{c}$. In particular $\vec{p}\xrightarrow{\alpha}\vec{q}\xrightarrow{\vec{a}}\vec{p}$ with $\Delta(\alpha)+\Delta(\vec{a})=\vec{0}$. We deduce that $\alpha$ is reversible on $\vec{p}$ and Theorem~\ref{thm:revbound} shows that there exists a word $\alpha'$ such that  $\vec{p}\xrightarrow{\alpha'}\vec{q}$, $\Delta(\alpha')=\Delta(\alpha)$ and:
  $$|\alpha'|\leq 17 d^2x^{15d^{d+2}}$$
  where $x=(1+2a)(1+\norm{\infty}{\vec{p}}+\norm{\infty}{\vec{a}})$. Note that $\norm{\infty}{\vec{p}}\leq \lambda_{|I|+1}-1$. We deduce that $x\leq (1+2a)(\lambda_{|I|+1}+a)\leq 6a \lambda_{|I|+1}$ since $1\leq a$ and $a\leq \lambda_{|I|+1}$. Hence $a|\alpha'|\leq \lambda_{|I|}$ thanks to the induction defining $\lambda$. Since $\pi_I(\vec{y})=\vec{p}$ we deduce that there exists a run from $\pi_I(\vec{y})$ labelled by $\alpha'$. As $\vec{y}(i)\geq \lambda_{|I|}\geq a|\alpha'|$ for every $i\in I$, Lemma~\ref{lem:holding} shows that there exists a run $\vec{y}\xrightarrow{\alpha'}\vec{x}$. Since $\Delta(\alpha')=\Delta(\alpha)=-\vec{a}$ we deduce that $\vec{x}=\vec{y}-\vec{a}$. From $\vec{y}\leq \vec{d}$ we get $\vec{x}\leq \vec{c}$ by subtracting $\vec{a}$. Moreover as $\vec{x}\xrightarrow{\vec{a}}\vec{y}\xrightarrow{\alpha'}\vec{x}$ we deduce that $\vec{x}\in\vec{D}_{\vec{a}}$. By minimality of $\vec{c}$ we get $\vec{c}=\vec{x}$. Hence $\vec{c}=\vec{y}-\vec{a}$. In particular $\norm{\infty}{\vec{c}}\leq \lambda_{|I|}+a\leq \lambda_0+a$. Finally let us get a bound on $\lambda_0$. We get the equality $\lambda_{n-1}=c\lambda_n^e$ by introducing $e=15d^{d+2}$ and $c=17d^2(6a)^e$. Hence $\lambda_0\leq (ca)^{e^{d+1}}\leq (102d^2a^2)^{ e^{d+2}}$ and from $e^{d+2}\leq (15d^{d+2})^{d+2}$ we are done.
\end{proof}

\newcommand{\transformer}[1]{\stackrel{#1}{\curvearrowright}}
\newcommand{\fo}[1]{\operatorname{FO}\left(#1\right)}
\section*{Conclusion}
The reversible reachability problem is proved to be EXPSPACE-complete in this paper. The proof is inspired by the Rackoff and Kosaraju ideas \cite{rackoff78,Kosaraju82}. We have introduced the domain of reversibility $\vec{D}_{\vec{a}}$ of every action $\vec{a}\in \vec{A}$. Observe that the reflexive and transitive closure of the following relation $R$ is a congruence and from \cite{Bouziane97} we deduce that this relation is definable in the Presburger arithmetic. That means there exist a Presburger formula $\phi$ that exactly denotes the pair $(\vec{x},\vec{y})$ of standard configurations in the reversible reachability relation. As a future work we are interested in characterizing precisely the size of such a formula (we already derive an elementary bound from \cite{Bouziane97} and Theorem~\ref{thm:domain}). 
$$R=\bigcup_{\vec{a}\in\vec{A}}\{(\vec{x},\vec{x}+\vec{a}) \mid \vec{x}\in\vec{D}_{\vec{a}}\}$$

The general vector addition system reachability problem was recently proved to be decidable thanks to inductive invariants definable in the Presburger arithmetic~\cite{L-popl11}. The proof is based on binary relations called \emph{transformer relations} over $\setQ_{\geq 0}^d$ where $\setQ_{\geq 0}$ is the set of non-negative rational numbers. The \emph{transformer relation} of a standard configuration $\vec{c}\in\setN^d$ is the binary relation $\transformer{\vec{c}}$ over $\setQ_{\geq 0}^d$ defined by $\vec{x}\transformer{\vec{c}}\vec{y}$ if there exists a run from $\vec{c}+n\vec{x}$ to $\vec{c}+n\vec{y}$ for some $n\in\setN$. This relation is proved to be definable in $\fo{\setQ,+,\leq}$ in \cite{L-popl11}. The proof is based on witness graphs. However, no upper bound on the size of these graphs are derived. As a future work we are interested in adapting techniques introduced in this paper for proving elementary upper-bounds on sizes of formulas in $\fo{\setQ,+,\leq}$ denoting transformer relations. Our main objective is the complexity of the general vector addition system reachability problem.

  \bibliography{thisbiblio}
  \bibliographystyle{alpha}

\end{document}